\newcommand{\edgecapacity}[1]{c_{#1}}
\newcommand{\nodecapacity}[1]{B_{#1}}
\newcommand{\source}{{s}}
\newcommand{\sink}{{t}}
\mathchardef\mhyphen="2D
\newtheorem{remark*}{Remark}
\newtheorem{openprob*}{Open Problem}
\newtheorem{example*}{Example}
\newtheorem{note*}{Note}
\newtheorem{prob*}{Problem}
\newlength\algowd
\begin{document}
\mainmatter
\title{Temporal flows in Temporal networks \thanks{This work was partially supported by (i) the School of EEE and CS and the NeST initiative of the University of Liverpool, (ii) the NSERC Discovery grant, (iii)  the Polish National Science Center grant DEC-2011/02/A/ST6/00201, and (iv)  the FET EU IP Project MULTIPLEX under contract No. 317532.},\thanks{To appear in the 10th International Conference on Algorithms and Complexity (CIAC 2017)}
}
\author{Eleni C. Akrida\inst{1} \and Jurek Czyzowicz\inst{2} \and Leszek G\k{a}sieniec\inst{1} \and \\{\L}ukasz Kuszner\inst{3} \and Paul G. Spirakis\inst{1,4}}
\authorrunning{Akrida, Czyzowicz, G\k{a}sieniec, Kuszner, Spirakis}
\institute{Department of Computer Science, University of Liverpool, UK\\
\email{\{Eleni.Akrida2,L.A.Gasieniec,P.Spirakis\}@liverpool.ac.uk}
\and
Universit\'e du Qu\'ebec en Outaouais, Dep. d'Informatique, Gatineau, QC, Canada\\
\email{jurek@uqo.ca}
\and
Gda\'nsk University of Technology, Faculty of Electronics, Telecommunications and Informatics, Poland\\
\email{kuszner@eti.pg.gda.pl}
\and
Computer Technology Institute \& Press ``Diophantus'' (CTI), Patras, Greece
}
\maketitle

\begin{abstract}
We introduce temporal flows on temporal networks~\cite{kempe,spirakis}, i.e., networks the links of which exist only at certain moments of time. Such networks are ephemeral in the sense that no link exists after some time. Our flow model is new and differs from the ``flows over time'' model, also called ``dynamic flows'' in the literature. We show that the problem of finding the maximum amount of flow that can pass from a source vertex $\source$ to a sink vertex $\sink$ up to a given time is solvable in Polynomial time, even when node buffers are bounded. We then examine mainly the case of unbounded node buffers. We provide a simplified static \emph{Time-Extended network} ($\mathrm{STEG}$), which is of \emph{polynomial size to the input} and whose static flow rates are equivalent to the respective temporal flow of the temporal network; using $\mathrm{STEG}$, we prove that the maximum temporal flow is equal to the minimum \emph{temporal $\source \mhyphen \sink$ cut}. We further show that temporal flows can always be decomposed into flows, each of which moves only through a journey, i.e., a directed path whose successive edges have strictly increasing moments of existence. We partially characterise networks with random edge availabilities that tend to eliminate the $\source \to \sink$ temporal flow. We then consider \emph{mixed} temporal networks, which have some edges with specified availabilities and some edges with random availabilities; we show that it is \pmb{\#P}-hard to compute the \emph{tails and expectations of the maximum temporal flow} (which is now a random variable) in a mixed temporal network.
\end{abstract}

\section{Introduction and motivation}

\subsection{Our model and the problem}

It is generally accepted to describe a network topology using a graph, whose vertices represent the communicating entities and edges correspond to the communication opportunities between them. Consider a directed graph (network) $G(V,E)$ with a set $V$ of $n$ vertices (nodes) and a set $E$ of $m$ edges (links). Let $\source, \sink \in V$ be two special vertices called the \emph{source} and the \emph{sink}, respectively; for simplicity, assume that no edge enters the source $\source$ and no edge leaves the sink $\sink$. We also assume that an infinite amount of a quantity, say, a liquid, is available in $\source$ at time zero. However, our network is \emph{ephemeral}; each edge is available for use only at certain \emph{days} in time, described by positive integers, and after some (finite) day in time, no edge becomes available again. For example, some edge $e=(u,v)$ may exist only at days $5$ and $8$; the reader may think of these days as instances of availability of that edge. Our liquid, located initially at node $\source$, can flow in this ephemeral network through edges only at days at which the edges are available.

Each edge $e\in E$ in the network is also equipped with a \emph{capacity} $\edgecapacity{e}>0$ which is a positive integer, unless otherwise specified. We also consider each node $v\in V$ to have an internal buffer (storage) $B(v)$ of maximum size $\nodecapacity{v}$; here, $\nodecapacity{v}$ is also a positive integer; initially, we shall consider both the case where $\nodecapacity{v}= +\infty$, for all $v \in V$, and the case where all nodes have finite buffers. From Section~\ref{sec:teg} on, we only consider unbounded (infinite) buffers.

The \emph{semantics} of the flow of our liquid within $G$ are the following:
\begin{itemize}
\item Let an amount $x_v$ of liquid be at node $v$, i.e., in $B(v)$, at the \emph{beginning} of day $l$, for some $l \in \mathbb{N}$. Let $e=(v,w)$ be an edge that exists at day $l$. Then, $v$ may \emph{push} some of the amount $x_v$ through $e$ at day $l$, as long as that amount is at most $\edgecapacity{e}$. This quantity will arrive to $w$ at the \emph{end} of \emph{the same day}, $l$, and will be stored in $B(w)$.
\item At the end of day $l$, for any node $w$, some flows may arrive from edges $(v,w)$ that were available at day $l$. Since each such quantity of liquid has to be stored in $w$, the sum of all flows incoming to $w$ plus the amount of liquid that is already in $w$ at the end of day $l$, after $w$ has sent any flow out of it at the beginning of day $l$, must not exceed $\nodecapacity{w}$.
\item Flow arriving at $w$ at (the end of) day $l$ can leave $w$ only via edges existing at days $l' >l$.
\end{itemize}

Thus, our flows are not flow rates, but flow amounts (similar to considerations in \emph{transshipment problems}).

Notice that we assume above that we have absolute knowledge of the days of existence of each edge. This information is detailed, but it can model a range of scenarios where a network is operated by many users and detailed description of link existence (or lack thereof) is needed; for example, one may need to have detailed information on planned maintenance on pipe-sections in a water network to assure restoration of the network services, and one may need to know in advance the time schedule of a rail network to circulate passengers. However, such a detailed input can not be used in all practical cases; often, instead of having a specific list of days of existence of some edge(s), one may be able to obtain statistical knowledge of a pattern of existence of connections via previously gathered information. A model that captures such cases is the model of \emph{Mixed Temporal Networks}, which we introduce and study here, along with the traditional Temporal Networks model.

We provide efficient solutions to the \emph{Maximum temporal flow problem} (MTF): Given a directed graph $G$ with edge availabilities, distinguished nodes $\source, \sink$, edge capacities and node buffers as previously described, and also given a specific day $l' > 0$, find the maximum value of the quantity of liquid that can arrive to $\sink$ by (the end of) day $l'$.

Notice that no flow will arrive to $\sink$ in fewer days than the ``temporal distance of $\sink$ from $\source$'' (the smallest \emph{arrival time} of any $\source \to \sink$ path with strictly increasing days of availability on its consecutive edges; here, \emph{arrival time} is the day of availability of the last edge on the path).\\

\noindent\emph{\textbf{Relation to previously studied problems.}}~~
MTF is related both to the problem of standard maximum (instantaneous) flows and to the {\em transshipment problem}; in the latter, the network has several sources and sinks, each source with a specified supply and each sink with a specified demand. Studies on the quickest transshipment problem~\cite{hoppe_trans,kamiyama} consider the problem of sending exactly the right amount of flow out of each source and into each sink in the minimum overall time; the networks considered there have edges with transit times. Other authors have also considered problems such as minimising capacity violations in transshipment networks~\cite{radzik}, where the initial capacity constraints render the problem infeasible, but an increase in the capacities by some additive terms (the \emph{capacity violations}) allow a feasible shipment so as to minimise an objective function.

However, MTF is very different from both the standard maximum flow problem and the transshipment problem. Indeed, in the network of Figure~\ref{fig:example_difference} with all node buffers and edge capacities being infinite, but \emph{all edges existing only at the same day}, say $l=5$, no flow can \emph{ever} arrive to $\sink$.
\begin{figure}[htb]
\centering\includegraphics[scale=0.44]{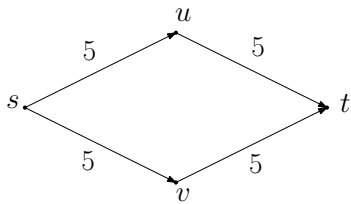}
\caption{Difference between temporal flows and standard flows.}
\label{fig:example_difference}
\end{figure}

Moreover, MTF has not been a well examined problem in previous work on (continuous or discrete) \emph{dynamic flows} considered in \cite{skutella5,skutella1,tardos,skutella4,hoppe}, and references therein. Indeed, the ``transit time'' on each edge of our networks is less than one day, and \emph{only} if the edge exists at that day. \emph{All} units of flow that are located at the tail of an edge at a moment when the edge becomes available may pass through the edge all together (like a ``packet'' of information), if the edge capacity allows it. In fact, our model is an extreme case of a version of a discrete dynamic flows model called Dynamic Dynamic Flows~\cite[Chapter~8]{hoppe-phd}.

Also, in our model, the existence of node buffers (holdover flow) is \emph{necessary}; \emph{in contrast to all previous flow and transshipment studies}, our networks cannot propagate flow without holdover flows, i.e., node buffers storing flow units.

So, we consider here ephemeral networks that change over time, as well as flows that are dynamic and the movement of which is determined by the temporal structure of the network.

\subsection{Previous work}

The traditional (static) network flows were extensively studied in the seminal book of Ford and Fulkerson~\cite{ford-fulkerson} (see also Ahuja et al~\cite{ahuja}) and the relevant literature is vast. \emph{Dynamic network flows} (see, e.g.,~\cite{hoppe-phd}) refer to \emph{static} directed networks, the edges of which have capacities as well as transit times. Ford and Fulkerson~\cite{ford-fulkerson} formulated and solved the dynamic maximum flow problem. For excellent surveys on dynamic network flows, the reader is also referred to the work of Aronson~\cite{aronson}, the work of Powell~\cite{powell}, and the great survey by Skutella~\cite{skutella3}. Dynamic network flows are also called \emph{flows over time}. In \cite{skutella1}, the authors review \emph{continuous} flows over time where $f_e(\theta)$ is the rate of flow (per time unit) entering edge $e$ at time $\theta$; the values of $f_e(\theta)$ are assumed to be Lebesgue-measurable functions. In our model, we assume that any flow amount that can pass through an edge at an instant of existence, will pass, i.e., our $f_e(\theta)$ is infinite in a sense. In a technical report~\cite{tjandra}, the authors examine earliest arrival flows with time-dependent travel times and edge capacities; they describe the flow equations of their model and give their own Ford-Fulkerson approach and dynamic cut definitions; although different to their model, our work gives an intuitively simpler definition of a temporal cut. For various problems on flows over time, see~\cite{skutella5,skutella1,ghaffari,skutella4,hoppe,Kappmeier15,woeginger,skutella2}. Flows over time have been also considered in problems of scheduling jobs in a network~\cite{boland}.

Classical static flows have recently been re-examined for the purpose of approximating their maximum value or improving their time complexity~\cite{radzik2,radzik3,serna,batra_garg,wiese,andoni, naveen2, naveen1, racke-flow, sankowski, madry, orlin}. Network flows have also been used in multi-line addressing~\cite{eisenbrand}.

Another relevant problem to the one we consider here is the \emph{transshipment problem}. In a transshipment problem, shipments of products (i.e., of amounts of products, in analogy to amounts of flows in our model) are allowed between source-sink pairs in a network, where each source has some supply and each sink has some demand. In some applications, shipments may also be allowed between sources and between sinks. Transshipment problems have also been extensively studied in literature; for example, in studies on the \emph{quickest} transshipment problem~\cite{hoppe_trans,kamiyama}, the authors consider networks with transit times on their edges and study the problem of sending exactly the right amount of flow out of each source and into each sink in the minimum overall time. Other authors have considered problems such as minimising capacity violations in transshipment networks~\cite{radzik}, where the initial capacity constraints render the problem infeasible, but an increase in the capacities by some additive terms (the \emph{capacity violations}) allow a feasible shipment so as to minimise an objective function.

Temporal networks, defined by Kempe et al.~\cite{kempe}, are graphs \emph{the edges of which exist only at certain instants of time, called labels} (see also~\cite{spirakis}). So, they are a type of \emph{dynamic} networks. Various aspects of temporal (and other dynamic) networks were also considered in the work of Erlebach et al~\cite{erlebach} and in~\cite{akrida-spaa, akrida-waoa, akrida-algo, avin, casteigts, dutta, michail, nicosia, o'dell, sch}; as far as we know, this is the first work to examine flows on temporal networks. Berman~\cite{berman} proposed a similar model to temporal networks, called \emph{scheduled networks}, in which each edge has separate departure and arrival times; he showed that the max-flow min-cut theorem holds in scheduled networks, when edges have \emph{unit capacities}. There is also literature on models of temporal networks with random edge availabilities~\cite{chaintreau,clementi,akrida-jpdc}, but to the best of our knowledge, ours is the first work on flows in such temporal networks.

Perhaps the closest model in the flows literature to the one we consider is the ``\emph{Dynamic}\footnote{The first ``dynamic'' term refers to the dynamic nature of the underlying graph, i.e., appearance and disappearance of its edges} dynamic network flows'', studied by Hoppe in his PhD thesis~\cite[Chapter 8]{hoppe-phd}. In~\cite[Chapter 8]{hoppe-phd}, Hoppe introduces \emph{mortal edges} that exist between a start and an end time; still, Hoppe assumes transmission rates on the edges and the ability to hold any amount of flow on a node (infinite node buffers). Thus, our model is an extreme case of the latter, since we assume that edges exist only at specific days (instants) and that our transit rates are virtually unbounded, since at one instant \emph{any amount} of flow can be sent through an edge if the capacity allows.

\subsection{Our results}
We introduce flows in Temporal Networks for the first time. We are interested in the maximum total amount of flow that can pass from $\source$ to $\sink$ during the lifetime of the network; notice that the edges of the network exist only at some days during the lifetime, different in general for each edge.

In Section~\ref{sec:def}, we formulate the problem of computing the maximum temporal flow and in Section~\ref{sec:LP}, we show that it can be solved in polynomial time, even when the node capacities are finite. This is in contrast to the NP-hardness result conjectured by Hoppe~\cite[personal communication with Klinz]{hoppe-phd} for bounded holdover flows in \emph{dynamic dynamic} networks, which is the model closest to ours.

The remainder of the paper mainly concerns networks in which the nodes have unbounded buffers, i.e., buffers with infinite capacity. In Section~\ref{sec:time_ext}, we define the corresponding time-extended network ($\mathrm{TEG}$) which converts our problem to a static flow problem (following the time-extended network tradition in the literature~\cite{ford-fulkerson}). However, we manage to simplify $\mathrm{TEG}$ into a \emph{simplified time-extended network} ($\mathrm{STEG}$), the size of which, i.e., number of nodes and edges, is \emph{polynomial on the input}, and not exponential as usual in flows over time. Using the $\mathrm{STEG}$, we prove our \emph{maximum temporal flow-minimum temporal cut} theorem; temporal cuts extend the traditional cut notion, since the edges included in a cut need not exist at the same day(s) in time.
We also show that temporal flows are always decomposable into a set of flows, each moving through a particular journey, i.e., directed path whose time existence of successive edges strictly increases. 

Admittedly, the encoding of the input in our temporal network problems is quite detailed but as previously mentioned, specific description of the edge availabilities may be required in a range of network infrastructure settings where there is a planned schedule of link existence. On the positive side, some problems that are weakly NP-hard in similar dynamic flow models become polynomially solvable in our model. However, in many practical scenarios it is reasonable to assume that not all edge availabilities are known in advance, e.g., in a water network where there may be unplanned disruptions at one or more pipe sections; in these cases, one may have statistical information on the pattern of link availabilities. In Section~\ref{sec:random_journeys}, we demonstrate cases of temporal flow networks with randomly chosen edge availabilities that eliminate the flow that arrives at $\sink$ asymptotically almost surely. We also introduce and study flows in mixed temporal networks for the first time; these are networks in which the availabilities of some edges are random and the availabilities of some other edges are specified. In such networks, the value of the maximum temporal flow is a random variable. Consider, for example, the temporal flow network of Figure~\ref{fig:mixed_net} where there are $n$ directed disjoint two-edge paths from $\source$ to $\sink$. Assume that \emph{every} edge independently selects a \emph{unique} label uniformly at random from the set $\{1,\ldots, \alpha\},~\alpha \in \mathbb{N}^*$. The edge capacities are the numbers drawn in the boxes, with $w_i' \geq w_i$ for all $i$. Here, the value of the maximum $\source \to \sink$ flow is a random variable that is the sum of Bernoulli random variables. This already indicates that the exact calculation of the maximum flow in mixed networks is a hard problem. In Section~\ref{sec:mixed_net} we show for mixed networks that it is \pmb{\#P}-hard to compute tails and expectations of the maximum temporal flow.
\begin{figure}[tbh]
\centering
\includegraphics[scale=0.4]{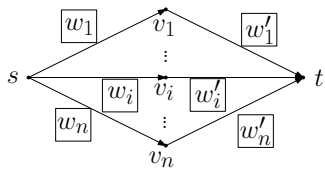}
\caption{A mixed temporal network}
\label{fig:mixed_net}
\end{figure}

\subsection{Formal Definitions}\label{sec:def}
\begin{definition}[(Directed) Temporal Graph]
Let $G=(V,E)$ be a directed graph. A (directed) temporal graph on $G$ is an ordered triple $G(L)=(V,E,L)$, where $L=\{L_e \subseteq \mathbb{N}:e\in E\}$ assigns a \emph{finite} set $L_e$ of discrete labels to every edge (arc) $e$ of $G$. $L$ is called the \emph{labelling} of $G$. The labels, $L_e$, of an edge $e \in E$ are the \emph{integer time instances (e.g., days)} at which $e$ is available.
\end{definition}

\begin{definition}[Time edge]
Let $e=(u,v)$ be an edge of the underlying digraph of a temporal graph and consider a label $l\in L_e$. The ordered triplet $(u,v,l)$, also denoted as $(e,l)$, is called \emph{time edge}. We denote the set of time edges of a temporal graph $G(L)$ by $E_L$.
\end{definition}

A basic assumption that we follow here is that when a (flow) entity passes through an available edge $e$ at time $t$, then it can pass through a subsequent edge only at some time $t'\geq t+1$ and only at a time at which that edge is available. In the tradition of assigning ``transit times'' in the dynamic flows literature, one may think that any edge $e$ of the graph has some \emph{transit time}, $tt_e$, with $0 < tt_e < 1$, but \emph{otherwise arbitrary and not specified}. Henceforth, we will use $tt_e=0.5$ for all edges $e$, without loss of generality in our results; any value of $tt_e$ between 0 and 1 will lead to the same results in our paper.

\begin{definition}[Journey]
A \emph{journey} from a vertex $u$ to a vertex $v$, denoted as \emph{$u \to v$ journey}, is a sequence of time edges $(u, u_1, l_1)$, $(u_1, u_2, l_2)$, $\ldots$ , $(u_{k-1}, v, l_k)$, such that $l_i < l_{i +1}$, for each $1 \leq i \leq k - 1$. The last time label, $l_k$, is called the \emph{arrival time} of the journey.
\end{definition}

\begin{definition}[Foremost journey]
A $u \to v$ journey in a temporal graph is called \emph{foremost journey} if its arrival time is the minimum arrival time of all $u \to v$ journeys' arrival times, under the labels assigned to the underlying graph's edges. We call this arrival time the \emph{temporal distance}, $\delta(u,v)$, of $v$ from $u$.
\end{definition}
Thus, no flow arrives to $\sink$ (starting from $\source$) on or before any time $l< \delta(\source,\sink)$.

\begin{definition}[Temporal Flow Network]
A temporal flow network $\big( G(L),\source, \sink, c, B \big)$ is a temporal graph $G(L)=(V,E,L)$ equipped with:
\begin{enumerate}
\item a source vertex $\source$ and a sink (target) vertex $\sink$
\item for each edge $e$, a capacity $\edgecapacity{e}>0$; usually the capacities are assumed to be integers.
\item for each node $v$, a buffer $B(v)$ of storage capacity $\nodecapacity{v}>0$; $\nodecapacity{\source}$ and $\nodecapacity{\sink}$ are assumed to be infinite.
\end{enumerate}
If all node capacities are infinite, we denote the temporal flow network by $\big( G(L),\source, \sink, c\big)$.
\end{definition}

\begin{definition}[Temporal Flows in Temporal Flow Networks]\label{def:flow}
Let $\big( G(L)=(V,E,L),\source, \sink, c, B \big)$ be a temporal flow network. Let:
\begin{eqnarray*}
\delta_u^+ & = & \{e\in E| \exists w \in V, e=(u,w)\} \\
\delta_u^- & = & \{e\in E| \exists w \in V, e=(w,u)\}
\end{eqnarray*}
be the outgoing and incoming edges to $u$. Also, let $L_R(u)$ be the set of labels on all edges incident to $u$ along with an extra label $0$ (artificial label for initialization), i.e.,
\[ L_R(u) = \bigcup_{e\in \delta_u^+ \cup \delta_u^-} L_e \cup \{0\}\]

A temporal flow on $G(L)$ consists of a non-negative real number $f(e,l)$ for each time-edge $(e,l)$, and real numbers $b_u^-(l), b_u^\mu (l), b_u^+(l)$ for each node $u \in V$ and each ``day'' $l$. These numbers must satisfy all of the following:
\begin{enumerate}
\item \label{item-1} $0 \leq f(e,l) \leq \edgecapacity{e}$, for every time edge $(e,l)$,
\item $0 \leq b_u^-(l) \leq \nodecapacity{u},~0 \leq b_u^\mu(l) \leq \nodecapacity{u},~0 \leq b_u^+(l) \leq \nodecapacity{u}$, for every node $u$ and every $l \in L_R(u)$
\item for every $e \in E$, $f(e,0)=0$,
\item for every $v \in V\setminus \{\source\}$, $b_v^-(0) = b_v^\mu(0) = b_v^+(0) = 0$,
\item for every $e \in E$ and $l \not\in L_e$, $f(e,l)=0$,
\item at time $0$ there is an infinite amount of flow ``units'' available at the source $\source$,
\item for every $v \in V\setminus \{\source\}$ and for every $l \in L$, $b_v^-(l) = b_v^+(l_{prev})$, where $l_{prev}$ is the largest label in $L_R(v)$ that is smaller than $l$,
\item (Flow out on day $l$) for every $v \in V\setminus \{\source\}$ and for every $l$, $b_v^\mu(l) =b_v^-(l) - \sum_{e \in \delta_v^+} f(e,l)$,
\item \label{item-9} (Flow in on day $l$) for every $v \in V\setminus \{\source\}$ and for every $l$, $b_v^+(l) =b_v^\mu(l) + \sum_{e \in \delta_v^-} f(e,l)$.
\end{enumerate}
\end{definition}

\begin{note*}
One may think of $b_v^-(l), b_v^\mu(l), b_v^+(l)$ as the buffer content of liquid in $v$ at the ``morning'',``noon'', i.e., after the departures of flow from $v$, and ``evening'', i.e., after the arrivals of flow to $v$, of day $l$.
\end{note*}

\begin{note*}
For a temporal flow $f$ on an acyclic $G(L)$, if one could guess the (real) numbers $f(e,l)$ for each time-edge $(e,l)$, then the numbers $b_v^-(l), b_v^\mu(l), b_v^+(l)$, for every $v \in V$, can be computed by a single pass over an order of the vertices of $G(L)$ from $\source$ to $\sink$. This can be done by following (\ref{item-1}) through (\ref{item-9}) from Definition \ref{def:flow} from $\source$ to $\sink$.
\end{note*}

\begin{definition}[Value of a Temporal Flow]
The value $v(f)$ of a temporal flow $f$ is $b_\sink^+(l_{max})$ under $f$, i.e., the amount of liquid that, via $f$, reaches $\sink$ during the lifetime of the network ($l_{max}$ is the maximum label in $L$). If $b_\sink^+(l_{max})>0$ for a particular flow $f$, we say that $f$ is \emph{feasible}.
\end{definition}

\begin{definition}[Mixed temporal networks]\label{def:mixed}
Given a directed graph $G=(V,E)$ with a source $\source$ and a sink $\sink$ in $V$, let $E=E_1 \cup E_2$, so that $E_1 \cap E_2 = \emptyset$, and:
\begin{enumerate}
\item the labels (availabilities) of edges in $E_1$ are specified, and
\item each of the labels of the edges in $E_2$ is drawn uniformly at random from the set $\{1,2, \ldots, \alpha\}$, for some even integer $\alpha$\footnote{We choose an even integer to simplify the calculations in the remainder of the paper. However, with careful adjustments, the results would still hold for an arbitrary integer.}, independently of the others.
\end{enumerate}
We call such a network \emph{``Mixed Temporal Network $[1,\alpha]$''} and denote it by $G(E_1,E_2,\alpha)$.
\end{definition}

Note that (traditional) temporal networks as previously defined are a special case of the mixed temporal networks, in which $E_2=\emptyset$. However, with some edges being available at random times, the value of a temporal flow (until time $\alpha$) becomes a random variable and the study of relevant problems requires a different approach than the one needed for (traditional) temporal networks.

\begin{prob*}[Maximum Temporal Flow (MTF)]
Given a temporal flow network $\big( G(L),\source, \sink, c, B \big)$ and a day $d \in \mathbb{N}^*$, compute the maximum $b_\sink^+(d)$ over all flows $f$ in the network.
\end{prob*}

\section{LP for the MTF problem with or without bounded buffers}\label{sec:LP}

In the description of the MTF problem, if $d$ is not a label in $L$, it is enough to compute the maximum $b_\sink^+(l_m)$ over all flows, where $l_m$ is the maximum label in $L$ that is smaller than $d$. Henceforth, we assume $d=l_{max}$ unless otherwise specified; notice that the analysis does not change: if $d<l_{max}$, one can remove all time-edges with labels larger than $b$ and solve MTF in the resulting network with new maximum label at most $d$.

Note also that $b_\sink^+(l_{max})$ is not necessarily equal to the total outgoing flow from $\source$ during the lifetime of the network\footnote{The total outgoing flow from $\source$ by some day $x$ is the sum of all flow amounts that have ``left'' $\source$ by day $x$: $\sum_{l \in L_R(\source) \setminus \{ l^* \in \mathbb{N} : l*>x \}}\sum_{e \in \delta_\source^+}f(e,l)$.} , where the lifetime is $l_{max}-l_{min}$, $l_{min}$ being the smallest label in the network. For example, consider the network of Figure~\ref{fig:maxflow_vs_outflow}, where the labels of an edge are the numbers written next to it and its capacity is the number written inside the box; for $d=5$, the \emph{maximum flow by day $5$ is $b_\sink^+(5) = 8$}, i.e., the flow where $5$ units follow the journey $\source \to v \to \sink$ and $3$ units follow the journey $\source \to u \to v \to \sink$; however, the total outgoing flow from $\source$ by day $5$ is $10>8$.

\begin{figure}[htb]
\centering\includegraphics[scale=0.45]{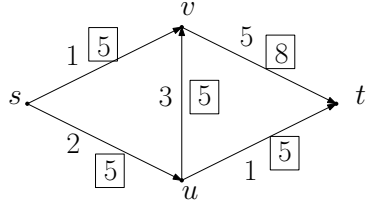}
\caption{Outgoing flow from $\source$ is not always the same as maximum flow by some day $d$; here $d=5$.}
\label{fig:maxflow_vs_outflow}
\end{figure}

Let $\Sigma$ be the set of conditions of Definition~\ref{def:flow}. The optimization problem, $\Pi$:
\[
\left\{\begin{array}{ll}
\text{max (over all $f$) } & b_\sink^+(d)\\
\text{subject to }& \Sigma
\end{array} \right\}
\]
is a \emph{linear program} with unknown variables $\{f(e,l), b_v^-(l), b_v^+(l)\},~\forall l \in L, \forall v \in V$, since each condition in $\Sigma$ is either a linear equation or a linear inequality in the unknown variables. Therefore, by noticing that the number of equations and inequalities are polynomial in the size of the input of $\Pi$, we get the following Lemma:

\begin{lemma}\label{lem:LP}
Maximum Temporal Flow is in P, i.e., can be solved in polynomial time in the size of the input, even when the node buffers are finite, i.e., bounded.
\end{lemma}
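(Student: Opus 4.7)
The plan is to show directly that the linear program $\Pi$ already displayed in the excerpt has polynomial size (both in its number of variables and in its number of constraints) with respect to the encoding of the input $\big(G(L),\source,\sink,c,B,d\big)$, and then invoke the polynomial-time solvability of linear programming with rational data (Khachiyan / Karmarkar).

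First I would fix notation for the input size. Let $n=|V|$, $m=|E|$, and $N=\sum_{e\in E}|L_e|$ the total number of labels; the encoding of the instance has size $\Theta(n+m+N+\log c_{\max}+\log B_{\max}+\log d)$. The number of time edges is exactly $|E_L|=N$, so the collection of flow variables $\{f(e,l):(e,l)\in E_L\}$ has size $N$. For each node $v$, the set $L_R(v)$ has size at most $1+\sum_{e\in\delta_v^+\cup\delta_v^-}|L_e|$, and therefore $\sum_{v\in V}|L_R(v)|\le n+2N$, so the buffer variables $b_v^-(l),b_v^\mu(l),b_v^+(l)$ contribute $O(n+N)$ more variables. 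In total the LP has $O(n+N)$ variables.

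Next I would walk through $\Sigma$ and count constraints. Item (1) gives $2N$ inequalities (one lower and one upper bound per time edge), items (2)--(5) give $O(n+N)$ bounding/initialisation equalities, and items (7)--(9) give one linear equation per pair $(v,l)$ with $l\in L_R(v)$, hence $O(n+N)$ more. Every constraint is either a linear equation or a linear inequality with coefficients in $\{-1,0,1\}$ (the capacity bounds additionally involve the integers $c_e$ and $B_v$, which are part of the input). Altogether $\Pi$ has $O(n+N)$ variables and $O(n+N)$ constraints, all of polynomial bit-length, and the objective $b_\sink^+(d)$ (with $d$ replaced by the largest label in $L$ not exceeding $d$, as discussed just above the lemma) is a single variable.

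Finally, since every rational-coefficient LP of polynomial size can be solved in polynomial time and produces an optimal rational solution whose bit-complexity is polynomial in the input, the optimal value of $\Pi$ — which is by construction the maximum $b_\sink^+(d)$ over all feasible temporal flows $f$ — can be computed in time polynomial in the size of the MTF instance. Crucially, nothing in this argument used $B_v=+\infty$: the upper bounds in item~(2) of Definition~\ref{def:flow} are handled uniformly whether they are finite or not, so the conclusion holds for bounded as well as unbounded buffers. The only place requiring a little care, and what I would flag as the main obstacle, is verifying that $|L_R(v)|$ and hence the number of "per-day" buffer constraints really stay polynomial in $N$ rather than blowing up with $l_{\max}$; this is handled by observing that only labels actually appearing on edges incident to $v$ (plus the artificial label $0$) need to be indexed, since between two consecutive such labels the buffer contents do not change and no constraint is needed.
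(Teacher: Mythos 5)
Your proposal is correct and takes essentially the same route as the paper: the paper likewise observes that $\Pi$ is a linear program whose number of variables and constraints is polynomial in the input (precisely because buffer variables are indexed only by the labels in $L_R(v)$, not by all days up to $l_{max}$) and concludes via polynomial-time LP solvability. Your version merely makes the variable/constraint count and the invocation of Khachiyan/Karmarkar explicit, which the paper leaves implicit.
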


\begin{note*}
Recall that $E_L$ denotes the set of time edges of a temporal graph. If $n=|V|, m=|E|$ and $k=|E_L|=\sum_e |L_e|$, then MTF can be solved in sequential time polynomial in $n+m+k$  when the capacities and buffer sizes can be represented with polynomial in $n$ number of bits. In the remainder of the paper, we shall investigate more efficient approaches for MTF.
\end{note*}

\begin{note*}
Lemma \ref{lem:LP} for bounded node buffers is in wide contrast with the claim that the corresponding problem in dynamic dynamic network flows is NP-complete~\cite[p.~82]{hoppe-phd}.
\end{note*}

\section{Temporal Networks with unbounded buffers at nodes}\label{sec:teg}
\subsection{Basic remarks}
We consider here the MTF problem for temporal networks on underlying graphs with $\nodecapacity{v}=+\infty,~\forall v \in V$.

\begin{definition}[Temporal Cut]
Let $\big( G(L),\source, \sink, c \big)$ be a temporal flow network on a digraph $G$. A set of time-edges, $S$, is called a temporal cut (separating $\source$ and $\sink$) if the removal from the network of $S$ results in a temporal flow network with \emph{no $\source \to \sink$ journey}.
\end{definition}

\begin{definition}[Minimal Temporal Cut]
A set of time-edges, $S$, is called a minimal temporal cut (separating $\source$ and $\sink$) if:
\begin{enumerate}
\item it is a temporal cut, and
\item the removal from the network of any $S' \subset S$ results in a temporal flow network with at least one journey from $\source$ to $\sink$, i.e., any proper subset of $S$ is \emph{not} a temporal cut.
\end{enumerate}
\end{definition}

\begin{definition}
Let $S$ be a temporal cut of $\big( G(L)=(V,E,L),\source, \sink, c \big)$. The \emph{capacity} of the cut is $c(S):= \sum_{(e,l) \in S} c(e,l)$, where $c(e,l) = \edgecapacity{e},~\forall l$.
\end{definition}

In Figure~\ref{fig:cut_capacity}, the numbers next to the edges are their availability labels and the numbers in the boxes are the edge capacities; here, a minimal temporal cut is $S=\{ \big( (\source, v), 1 \big), \big( (\source, v), 7 \big) \}$ with capacity $c(S)=20$. Notice that another minimal cut is $S'=\{\big( (v,t),8\big) \}$ with capacity $c(S')=2$.

\begin{figure}[htb]
\centering\includegraphics[scale=0.5]{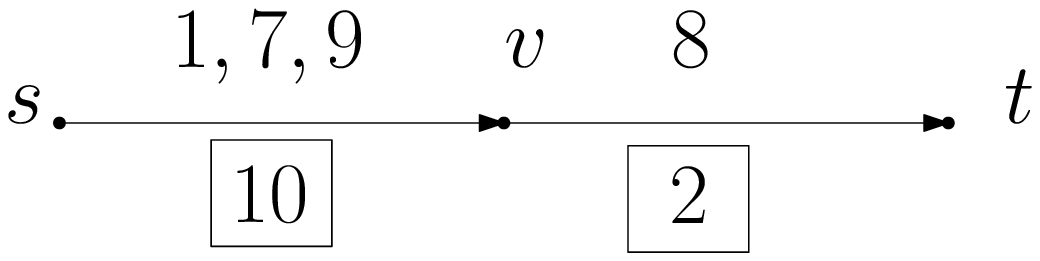}
\caption{$S=\{ \big( (\source, v), 1 \big), \big( (\source, v), 7 \big) \}$ is a minimal cut.}
\label{fig:cut_capacity}
\end{figure}

It follows from the definition of a temporal cut:
\begin{lemma}
Let $S$ be a (minimal) temporal cut in $\big( G(L)=(V,E,L),\source, \sink, c \big)$. If we remove $S$ from $G(L)$, no flow can ever arrive to $\sink$ during the lifetime of $G(L)$.
\end{lemma}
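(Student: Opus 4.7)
The plan is to reduce the lemma to the defining property of a temporal cut by showing that any positive amount of flow reaching $\sink$ traces out a $\source \to \sink$ journey in the underlying temporal graph. Once this is established, the conclusion is immediate: if removing $S$ kills every $\source \to \sink$ journey, then it also kills every way for flow to arrive at $\sink$.

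First I would proceed by contradiction. Let $G'(L)$ be the network obtained after removing $S$, and suppose some temporal flow $f$ on $G'(L)$ satisfies $b_{\sink}^+(l_{max}) > 0$. My aim is to construct a $\source \to \sink$ journey in $G'(L)$ whose every time-edge carries positive $f$-value. I would perform a backward trace starting from $(\sink, l_{max})$. Applying condition~(9) of Definition~\ref{def:flow}, together with the identity $b_v^-(l) = b_v^+(l_{prev})$ from condition~(7) iterated through inactive days, the positive content of $\sink$'s buffer at $l_{max}$ must have been deposited by at least one incoming time-edge $(v_1, \sink, l_k)$ with $f(v_1,\sink,l_k) > 0$. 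At $v_1$, the quantity shipped out on day $l_k$ was present in $b_{v_1}^-(l_k) = b_{v_1}^+(l_{prev})$, where $l_{prev}$ is the largest label in $L_R(v_1)$ strictly smaller than $l_k$. By condition~(4), $v_1 \neq \source$ implies $b_{v_1}^+(0) = 0$, so this positive content at day $l_{prev}$ must itself have been supplied by some incoming time-edge $(v_2, v_1, l_{k-1})$ with $l_{k-1} \le l_{prev} < l_k$ carrying positive flow.

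Iterating this back-trace produces a finite sequence of time-edges in $G'(L)$ with strictly decreasing labels that must terminate at $\source$, since any non-source node visited contributes no flow before receiving it through an earlier incoming time-edge (condition~(4)), and the set of labels is finite. Reversing the obtained sequence yields a $\source \to \sink$ journey in $G'(L)$, directly contradicting the fact that $S$ is a temporal cut. Therefore no such $f$ can exist, proving that no flow reaches $\sink$ after removing $S$.

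The main obstacle I anticipate is the rigour of the back-trace: at each step I must pick \emph{an actual} incoming time-edge with $f>0$, not merely a potential one, and I must ensure the sequence of labels is strictly decreasing so that the reversal is a valid journey. Both points follow from a careful pigeonhole argument on the flow-conservation identity $b_v^+(l) = b_v^\mu(l) + \sum_{e \in \delta_v^-} f(e,l)$ combined with $b_v^\mu(l) \le b_v^-(l) = b_v^+(l_{prev})$, which forces each strictly positive buffer value to be traceable to a strictly earlier positive-flow incoming time-edge. I would like to avoid invoking the general flow-decomposition theorem mentioned later in the paper, since that result is logically downstream and its own proof presumably relies on the same mechanism used here.
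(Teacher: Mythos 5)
Your proof is correct and follows the same route as the paper, whose entire argument is the one-line observation that removing $S$ leaves no $\source \to \sink$ journey and that ``any flow from $\source$ needs at least one journey to reach $\sink$, by definition.'' Your backward trace through conditions (4), (7)--(9) of Definition~\ref{def:flow} simply makes rigorous the step the paper treats as definitional, and it does so correctly (the strict label decrease from $b_v^-(l)=b_v^+(l_{prev})$ with $l_{prev}<l$ is exactly what guarantees the reversed sequence is a journey), while rightly avoiding circular reliance on the later flow-decomposition corollary.
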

\begin{proof}
The removal of $S$ leaves no $\source \to \sink$ journey and any flow from $\source$ needs at least one journey to reach $\sink$, by definition.
\end{proof}

\subsection{The time-extended flow network and its simplification}\label{sec:time_ext}
Let $\big( G(L)=(V,E,L),\source, \sink, c \big)$ be a temporal flow network on a directed graph $G$. Let $E_L$ be the set of time edges of $G(L)$. Following the tradition in literature~\cite{ford-fulkerson}, we construct from $G(L)$ a \emph{static} flow network called \emph{time-extended} that corresponds to $G(L)$, denoted by $\mathrm{TEG}(L)=(V^*, E^*)$. By construction, $\mathrm{TEG}(L)$ admits the same maximum flow as $G(L)$. $\mathrm{TEG}(L)$ is constructed as follows.

For every vertex $v\in V$ and for every time step $i=0,1,\ldots,l_{max}$, we add to $V^*$ a copy, $v_i$, of $v$. $V^*$ also contains a copy of $v$ for every time edge $(x,v,l)$ of $G(L)$; in particular, we consider a copy $v_{l+tt}$ of $v$ in $V^*$, for some $l \in \mathbb{N}$, if $(x,v,l)\in E_L$, for some $x\in V$. 
Notice that $0 < tt < 1$ (by definition of the transit times), so if a vertex $v\in V$ has an incoming edge $e$ with label $l$ and an outgoing edge with label $l+1$, the copies $v_{l+tt},v_{l+1}$ of $v$ in $V^*$ will never be identical (see Figure~\ref{vertex-copies-fig}).
\begin{figure}[htb]
\centering\includegraphics[scale=0.5]{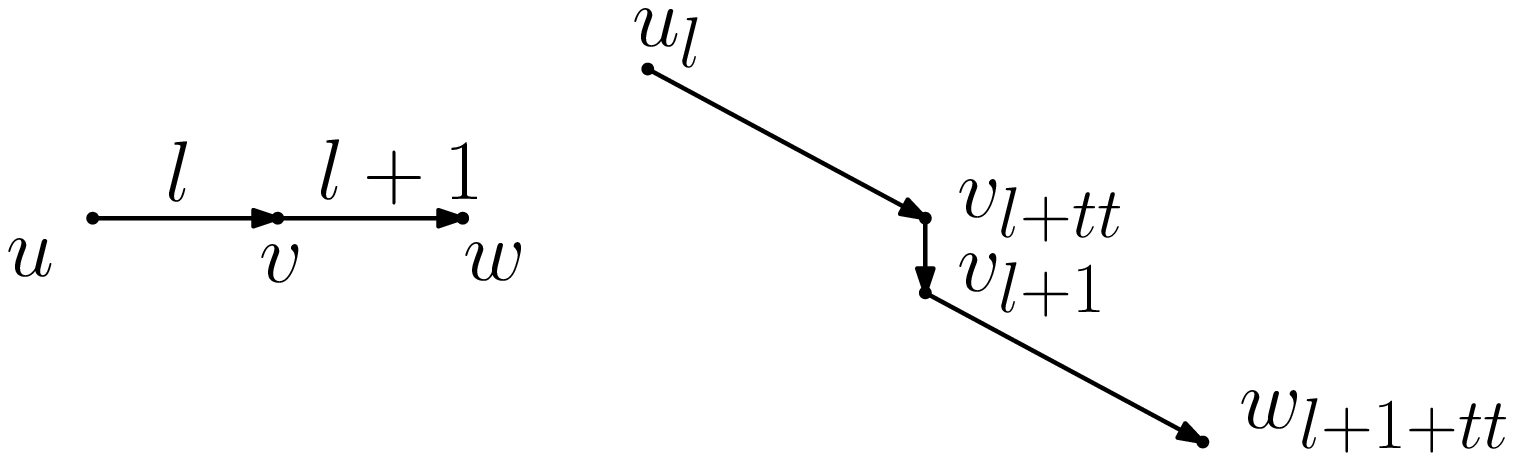}
\caption{The copies of vertex $v$ in $\mathrm{TEG}(L)$.}
\label{vertex-copies-fig}
\end{figure}

$E^*$ has a directed edge (called \emph{vertical}) from a copy of vertex $v$ to the \emph{next} copy of $v$, for any $v \in V$. More specifically,
\[
\forall v\in V,~(v_i,v_j) \in E^* \iff  \begin{cases} v_i,v_j \in V^*, & \mbox{ and} \\     j>i, & \mbox{ and} \\  \forall k > i : v_k \in V^* \implies k \geq j &\end{cases}
\]
Furthermore, for every time edge of $G(L)$, $E^*$ has a directed edge (called \emph{crossing}) as follows:
\[
\forall u,v \in V, l\in \mathbb{N},~(u,v,l)\in E \iff (u_l,v_{l+tt}) \in E^*
\]
Every crossing edge $e \in \mathrm{TEG}(L)$ that connects copies of vertices $u,v \in V$ has the capacity of the edge $(u,v) \in G(L)$, $\edgecapacity{e} = \edgecapacity{u,v}$. Every vertical edge $e \in \mathrm{TEG}(L)$ has capacity $\edgecapacity{e} = \nodecapacity{v} = +\infty$. The source and target vertices in $\mathrm{TEG}(L)$ are the first copy of $\source$ and the last copy of $\sink$ in $V^*$, respectively. Note that $|V^*| \leq |V| \cdot l_{max} +  |E_L|$ and $|E^*| \leq |V| \cdot l_{max} + 2 |E_L|$. 

We will now ``simplify'' $\mathrm{TEG}(L)$ as follows: we convert vertical edges between consecutive copies of the same vertex into a \emph{single vertical edge (with infinite capacity)} from the first to the last copy in the sequence and we remove all intermediate copies; we only perform this simplification when no intermediate node is an endpoint of a crossing edge. We call the resulting network \emph{simplified time-extended} network and we denote it by $\mathrm{STEG}(L)=(V',E')$.

In particular, for every vertex $v\in V$, $V^*$ has a copy $v_0$ of $v$, and a copy for each time edge that includes $v$ either as a first or as a last endpoint.
We consider a copy $v_l$ of $v$ in $V'$ iff $(v,x,l)\in E_L$, for some $x\in V$.
we consider a copy $v_{l+tt}$ of $v$ in $V^*$ iff $(x,v,l)\in E_L$, for some $x\in V$.

$E'$ has a directed \emph{vertical} edge from a copy of vertex $v$ to the \emph{next} copy of $v$, for any $v \in V$. More specifically,
\[
\forall v\in V,~(v_i,v_j) \in E' \iff  \begin{cases} v_i,v_j \in V', & \mbox{ and} \\     j>i, & \mbox{ and} \\  \forall k > i : v_k \in V' \implies k \geq j &\end{cases}
\]
Furthermore, for every time edge of $G(L)$, we consider the \emph{crossing} edge as in the time-extended graph, i.e.:
\[
\forall u,v \in V, l\in \mathbb{N},~(u,v,l)\in E \iff (u_l,v_{l+tt}) \in E'
\]

Every crossing edge $e \in \mathrm{STEG}(L)$, i.e., every edge that connects copies of different vertices $u,v \in V$, has the capacity of the edge $(u,v) \in G(L)$, $\edgecapacity{e} = \edgecapacity{u,v}$. Every edge $e \in \mathrm{STEG}(L)$ between copies of the same vertex $v \in V$ has capacity $\edgecapacity{e} = \nodecapacity{v} = +infty$. The source and target vertices in $\mathrm{STEG}(L)$ are the first copy of $\source$ and the last copy of $\sink$ in $V'$ respectively. Note that $|V'| \leq |V| + 2 |E_L|$ and $|E'| \leq |V|+ 3 |E_L|$.

Denote the first copy of any vertex $v \in V$ in the time-extended network by $v_{copy_0}$, the second copy by $v_{copy_1}$, the third copy by $v_{copy_2}$, etc. Let also:
\begin{eqnarray*}
\delta_u^+ & = & \{e\in E| \exists w \in V, e=(u,w)\} \\
\delta_u^- & = & \{e\in E| \exists w \in V, e=(w,u)\}
\end{eqnarray*}
An $\source \to \sink$ flow $f$ in $G(L)$ defines an $\source \to \sink$ flow (rate), $f_R$, in the time-extended network $\mathrm{STEG}(L)$ as follows:
\begin{itemize}\setlength\itemsep{1em}
\item The flow from the first copy of $\source$ to the next copy is the sum of all flow units that ``leave'' $\source$ in $G(L)$ throughout the time the network exists:
\[
f(\source_{copy_0},\source_{copy_1}) := \sum_{l\in \mathbb{N}} \sum_{e \in \delta_\source^+} f(e,l)
\]

\item The flow from the first copy of any \emph{other} vertex to the next copy is zero:
\[
\forall v\in V\setminus \source, ~f(v_{copy_0},v_{copy_1}) := 0
\]

\item The flow on any crossing edge that connects some copy $u_l$ of vertex $u\in V$ and the copy $v_{l+tt}$ of some other vertex $v\in V$ is exactly the flow on the time edge $(u,v,l)$:
\[
\forall (u_l,v_{l+tt})\in E', ~f(u_l,v_{l+tt}) := f((u,v),l)
\]

\item The flow between two \emph{consecutive} copies $v_x$ and $v_y$, for some $x,y$, of the same vertex $v \in V$ corresponds to the units of flow stored in $v$ from time $x$ up to time $y$ and is the difference between the flow \emph{received} at the first copy through all incoming edges and the flow \emph{sent} from the first copy through all outgoing crossing edges. So, $\forall v\in V, i=1,2, \ldots$, it is:
\[
f(v_{copy_i},v_{copy_{i+1}}) :=  \textstyle{\sum}_{z \in V'} f(z,v_{copy_i}) - \textstyle{\sum}_{u \in V' \setminus v_{copy_{i+1}}} f(v_{copy_i},u)
\]
\end{itemize}

\paragraph{Example.} 
Figure~\ref{temporal-flow-fig} shows a temporal network $G(L)$ with source $\source$ and sink $\sink$. The labels of an edge are shown next to the edge and the capacity of an edge is shown \emph{written in a box} next to the edge. The respective simplified time-extended static graph $\mathrm{STEG}(L)$ is shown in Figure~\ref{extended-time-flow-fig}. The capacity of an edge is shown \emph{written in a box} next to the edge. Notice that edges between copies of the same vertex have infinite capacities (equal to the infinite capacity of the vertex buffer) which are not shown in the figure.
\begin{figure}[!ht]
    \subfloat[Temporal flow network $G(L)$\label{temporal-flow-fig}]{%
      \includegraphics[width=0.45\textwidth]{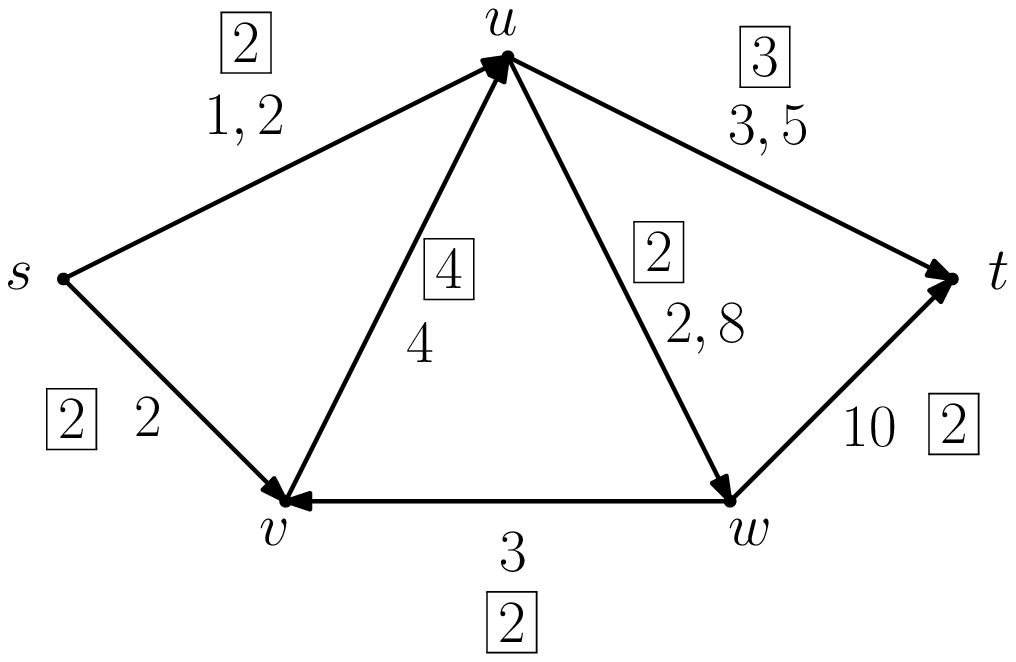}
    }
    \hfill
    \subfloat[Simplified time extended network $\mathrm{STEG}(L)$\label{extended-time-flow-fig}]{%
      \includegraphics[width=0.45\textwidth]{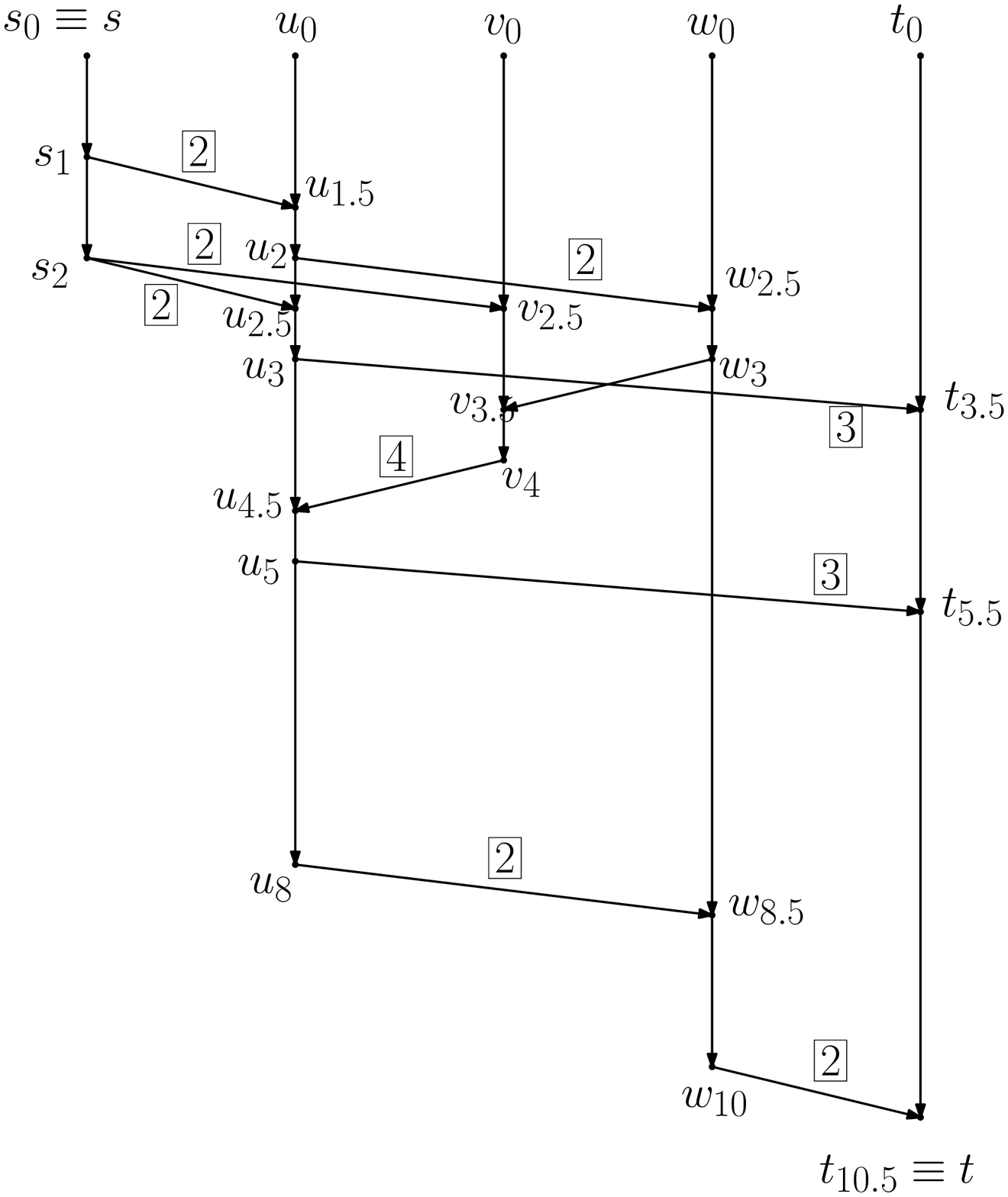}
    }
    \caption{Constructing the Simplified time-extended network}
		\label{time-extended-graph-flow-fig}
\end{figure}

Let $f_R$ be a static flow rate in the static network $\mathrm{STEG}(L)$ that corresponds to a temporal flow $f$ in a temporal flow network $\big( G(L)=(V,E,L),\source, \sink, c \big)$. By the construction of $\mathrm{STEG}(L)$, it follows:
\begin{lemma}\label{lem:steg_flow}
Given a temporal flow network $\big( G(L)=(V,E,L),\source, \sink, c \big)$ on a directed graph $G$,
\begin{enumerate}
\item The maximum temporal flow (from $\source$ to $\sink$), $max_f v(f)$, in $G(L)$ is equal to the maximum (standard) flow rate from $\source$ to $\sink$ in the \emph{static} network $\mathrm{STEG}(L)$.
\item A temporal flow $f$ is proper in $G(L)$ (i.e., satisfies all constraints) iff its corresponding static flow rate $f_R$ is feasible in $\mathrm{STEG}(L)$.
\end{enumerate}
\end{lemma}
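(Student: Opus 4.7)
The plan is to prove Part 2 first, from which Part 1 follows immediately by taking maxima over both sides. For Part 2, I would exhibit a bijection between proper temporal flows $f$ in $G(L)$ and feasible static flows $f_R$ in $\mathrm{STEG}(L)$ that preserves the flow value; the forward direction of this bijection is precisely the construction displayed just before the lemma.

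First I would verify that if $f$ satisfies items (1)--(9) of Definition~\ref{def:flow}, then $f_R$ is a feasible static flow in $\mathrm{STEG}(L)$. Capacity bounds on crossing edges transfer directly from item~(1) because $f_R(u_l, v_{l+tt}) = f((u,v),l)$ and the capacity of the crossing edge is $\edgecapacity{(u,v)}$; vertical edges carry infinite capacity and impose no constraint. For conservation, I would check each copy separately. At a departure copy $v_l$ of $v$ (the one corresponding to outgoing time edges at label $l$), the incoming vertical flow equals $b_v^-(l)$, while the outgoing flow is the sum of crossing outflows $\sum_{e\in\delta_v^+}f(e,l)$ plus the vertical flow to the next copy; item~(8) identifies the latter with $b_v^\mu(l)$, so conservation holds. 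At an arrival copy $v_{l+tt}$, the incoming vertical flow is $b_v^\mu(l)$ and the incoming crossing flows sum to $\sum_{e\in\delta_v^-}f(e,l)$; by item~(9) their sum is $b_v^+(l)$, which item~(7) (applied at the next event time) equals the outgoing vertical flow.

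Second, I would construct the inverse map: given any feasible static flow $g$ in $\mathrm{STEG}(L)$, set $f((u,v),l) := g(u_l, v_{l+tt})$ for every time edge, and read the buffer contents $b_v^-(l), b_v^\mu(l), b_v^+(l)$ from the vertical flows entering $v_l$, leaving $v_l$, and leaving $v_{l+tt}$ respectively. Items (3)--(6) of Definition~\ref{def:flow} are initial conditions at the copies $v_{copy_0}$ and follow directly from the construction; item~(5) holds by definition since there is no crossing edge for a non-label; and items (1), (2), (7)--(9) translate, copy by copy, into non-negativity, capacity feasibility, and flow conservation in $\mathrm{STEG}(L)$. Hence $f$ is a proper temporal flow. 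The two maps are inverse by construction.

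Finally, the value is preserved: $v(f_R)$ equals the total flow into the last copy of $\sink$, and since $\sink$ has no outgoing edges in $G(L)$ every unit entering any arrival copy of $\sink$ remains on vertical edges until the last copy, so this total is exactly $b_\sink^+(l_{max}) = v(f)$. The main subtlety, and the point deserving most care, is the ordering of the two copies $v_l$ and $v_{l+tt}$ of the same vertex at the same integer time $l$: the choice $0<tt<1$ in the construction guarantees that the departure copy $v_l$ strictly precedes the arrival copy $v_{l+tt}$ in any topological order of $\mathrm{STEG}(L)$, which is exactly what makes the ``morning'', ``noon'', and ``evening'' semantics of Definition~\ref{def:flow} match the three consecutive vertical edges $(v_{\mathrm{prev}}, v_l), (v_l, v_{l+tt}), (v_{l+tt}, v_{\mathrm{next}})$. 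This alignment is the crux of the correspondence; once it is established, both implications are routine bookkeeping, and Part~1 follows by maximizing over $f$.
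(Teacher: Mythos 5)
Your proposal is correct and takes essentially the same approach as the paper: the paper offers no written argument beyond the remark that the lemma follows ``by the construction of $\mathrm{STEG}(L)$,'' i.e., from the value-preserving correspondence $f \mapsto f_R$ displayed just before the statement, which is exactly the bijection you verify. Your explicit bookkeeping (capacity transfer on crossing edges, conservation at departure and arrival copies via items (7)--(9) of Definition~\ref{def:flow}, the inverse map from a feasible static flow, and the role of $0<tt<1$ in placing $v_l$ before $v_{l+tt}$) simply fills in the routine details the paper leaves implicit.
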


\begin{lemma}\label{lem:min_cut}
The minimum capacity $\source \mhyphen \sink$ cut of the static network $\mathrm{TEG}(L)$ is equal to the minimum capacity $\source \mhyphen \sink$ cut of the static network $\mathrm{STEG}(G)$.
\end{lemma}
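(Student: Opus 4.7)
The plan is to exploit two facts: (i) in both $\mathrm{TEG}(L)$ and $\mathrm{STEG}(L)$, every vertical edge has infinite capacity, and (ii) the crossing edges of the two networks are in a natural capacity-preserving bijection induced by the identity on time edges of $G(L)$ (each time edge $(u,v,l)$ creates the same single crossing edge $(u_l, v_{l+tt})$ in both). From (i), I will argue that any minimum $\source\mhyphen\sink$ cut in either network consists only of crossing edges; from (ii), I will establish that the sets of crossing-edge cuts coincide in the two networks, which immediately gives equality of the minimum capacities.

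First I would verify that the minimum cut capacity is finite in both networks: taking all crossing edges that correspond to the time edges $(e,l)$ with $e \in \delta_\source^+$ yields a finite-capacity $\source\mhyphen\sink$ cut (any $\source\to\sink$ journey must use at least one such time edge). Hence no minimum cut can contain a vertical edge, and we may restrict attention to cuts $S$ that are subsets of the crossing edges. Under the bijection between crossing edges of $\mathrm{TEG}(L)$ and $\mathrm{STEG}(L)$, such a set $S$ has the same capacity in both networks.

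Next I would show: a set $S$ of crossing edges is an $\source\mhyphen\sink$ cut in $\mathrm{TEG}(L)$ if and only if it is an $\source\mhyphen\sink$ cut in $\mathrm{STEG}(L)$. For the ``only if'' direction, I take any (undirected-for-reachability) $\source\to\sink$ path $P'$ in $\mathrm{STEG}(L)\setminus S$ and expand each contracted vertical edge $(v_i,v_j)$ back into the chain of vertical edges of $\mathrm{TEG}(L)$ passing through all intermediate copies of $v$ that were suppressed; this is legitimate because $\mathrm{STEG}$ was obtained by contracting precisely those maximal vertical chains none of whose interior copies is incident to a crossing edge. The expansion produces a valid path in $\mathrm{TEG}(L)\setminus S$. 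Conversely, given an $\source\to\sink$ path $P$ in $\mathrm{TEG}(L)\setminus S$, I project it by replacing every sub-walk that traverses only vertical edges between two copies of the same vertex (both of which survive in $\mathrm{STEG}$, since every crossing-edge endpoint survives) by the corresponding single contracted vertical edge. This yields a path in $\mathrm{STEG}(L)\setminus S$.

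The main obstacle will be making the projection argument precise: I need to check that consecutive copies of $v$ that act as endpoints of crossing edges along the path $P$ are exactly the copies that survive in $\mathrm{STEG}$, so that the contracted vertical edge is indeed the one connecting them. This is immediate from the construction of $\mathrm{STEG}$, since a copy $v_i$ is kept precisely when it is incident to some crossing edge (or is the canonical copy $v_0$), and the contracted vertical edges connect exactly such consecutive surviving copies. Combining the bijection on crossing edges with the above equivalence, minimum-capacity cuts in the two networks are in value-preserving correspondence, which completes the proof.
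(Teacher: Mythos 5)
Your proof is correct and takes essentially the same approach as the paper, whose entire (two-line) argument consists of exactly the two observations you formalize: infinite-capacity vertical edges force any minimum $\source \mhyphen \sink$ cut in either network to use only crossing edges, and the crossing edges, with their capacities, are identical in $\mathrm{TEG}(L)$ and $\mathrm{STEG}(L)$. Your expansion/projection argument merely supplies the details the paper leaves implicit (the only nitpick being that a projected vertical sub-walk may correspond to a short chain of contracted vertical edges rather than a single one, which changes nothing since vertical edges never belong to the cut).
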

\begin{proof}
Any minimum capacity cut in either $\mathrm{TEG}(L)$ or $\mathrm{STEG}(L)$ uses crossing edges. But the crossing edges are the same in both networks. Therefore, the lemma holds.
\end{proof}

We are now ready to prove the main Theorem of this section:
\begin{theorem}
The maximum temporal flow in $\big( G(L)=(V,E,L),\source, \sink, c \big)$ is equal to the minimum capacity (minimal) temporal cut.
\end{theorem}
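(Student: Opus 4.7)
The plan is to reduce the claim to the classical max-flow/min-cut theorem on the static network $\mathrm{STEG}(L)$, using the two lemmas already available. By Lemma~\ref{lem:steg_flow}, the maximum temporal flow in $G(L)$ equals the maximum static $\source$-$\sink$ flow in $\mathrm{STEG}(L)$. By the Ford--Fulkerson max-flow/min-cut theorem applied to $\mathrm{STEG}(L)$, this equals the capacity of a minimum $\source$-$\sink$ cut in $\mathrm{STEG}(L)$. So the remaining task is to identify minimum-capacity cuts in $\mathrm{STEG}(L)$ with minimum-capacity temporal cuts in $G(L)$.

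First I would observe that every vertical edge of $\mathrm{STEG}(L)$ has infinite capacity, hence any cut of finite capacity can consist only of crossing edges. By construction, the crossing edges of $\mathrm{STEG}(L)$ are in one-to-one correspondence with the time-edges of $G(L)$: the crossing edge $(u_l, v_{l+tt})$ corresponds to the time-edge $(u,v,l) \in E_L$. This gives a canonical bijection between sets of crossing edges in $\mathrm{STEG}(L)$ and sets of time-edges in $G(L)$.

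Next, I would prove the following equivalence: a set $S$ of crossing edges is an $\source$-$\sink$ cut in $\mathrm{STEG}(L)$ if and only if the corresponding set $\tilde S$ of time-edges is a temporal cut in $G(L)$. For the forward direction, suppose $\tilde S$ is not a temporal cut, so a journey $(u_0,u_1,l_1),(u_1,u_2,l_2),\ldots,(u_{k-1},u_k,l_k)$ with $u_0=\source$, $u_k=\sink$, and $l_1<l_2<\cdots<l_k$ survives in $G(L)\setminus \tilde S$. Then in $\mathrm{STEG}(L)\setminus S$ one can concatenate: the vertical path from $\source_{copy_0}$ to $(u_0)_{l_1}$, the crossing to $(u_1)_{l_1+tt}$, the vertical path up to $(u_1)_{l_2}$, the crossing to $(u_2)_{l_2+tt}$, and so on, ending with vertical edges up to the last copy of $\sink$. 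This yields an $\source$-$\sink$ path in $\mathrm{STEG}(L)\setminus S$, contradicting that $S$ is a cut. Conversely, every $\source$-$\sink$ directed path in $\mathrm{STEG}(L)$ moves strictly forward in time (both vertical and crossing edges increase the time index), so the sequence of crossings along such a path has strictly increasing labels and projects to a journey in $G(L)$; if $S$ is not a cut in $\mathrm{STEG}(L)$ then the surviving path produces a journey surviving $\tilde S$.

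Under this bijection, capacities are preserved by the very definition of $c(S)=\sum_{(e,l)\in S}c(e,l)$, so the minimum-capacity $\source$-$\sink$ cut of $\mathrm{STEG}(L)$ has exactly the same value as the minimum-capacity temporal cut of $G(L)$. A minimum-capacity temporal cut is automatically minimal (otherwise some proper subset would already be a temporal cut with strictly smaller capacity, since edge capacities are positive), so the bound is attained by a minimal temporal cut. Combining with Lemma~\ref{lem:steg_flow} yields the theorem. The step I expect to require the most care is the ``survival of journeys'' argument in the equivalence above, specifically ensuring that any $\source$-$\sink$ path in $\mathrm{STEG}(L)$ gives rise to a valid journey, which relies crucially on the fact that both edge types in $\mathrm{STEG}(L)$ advance time and that crossings enter a copy of the form $v_{l+tt}$, from which any subsequent crossing must use a copy $v_{l'}$ with $l' > l$.
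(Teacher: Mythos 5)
Your proposal is correct and takes essentially the same route as the paper's proof: reduce via Lemma~\ref{lem:steg_flow} to the classical max-flow/min-cut theorem on the static network $\mathrm{STEG}(L)$, observe that any finite-capacity cut there uses only crossing edges (vertical edges being of infinite capacity), and translate such cuts into temporal cuts of equal capacity. The only differences are cosmetic: you bypass the paper's detour through $\mathrm{TEG}(L)$ (Lemma~\ref{lem:min_cut}) and you spell out both directions of the cut/journey correspondence in detail, which the paper compresses into a single asserted sentence.
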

\begin{proof}
By Lemma~\ref{lem:steg_flow}, the maximum temporal flow in $G(L)$ is equal to the maximum flow rate from $\source$ to $\sink$ in $\mathrm{TEG}(L)$ and in $\mathrm{STEG}(L)$. But in $\mathrm{STEG}(L)$, the maximum $\source \mhyphen \sink$ flow rate is equal to the minimum $\source \mhyphen \sink$ cut~\cite{ford-fulkerson}. Now, by Lemma~\ref{lem:min_cut}, this cut is also equal in capacity to the minimum capacity $\source \mhyphen \sink$ cut in $\mathrm{TEG}(L)$. But any minimum capacity cut in $\mathrm{TEG}(L)$ is only using crossing edges and thus corresponds to a temporal cut in $G(L)$, of the same capacity (since the removal of the respective time-edges leaves no $\source \to \sink$ journey in $G(L)$).
\end{proof}

It is also easy to see that:
\begin{lemma}\label{lem:path_num}
Any static flow rate algorithm A that computes the maximum flow in a static, directed, $\source \mhyphen \sink$ network $G$ of $n$ vertices and $m$ edges in time $T(n,m)$, also computes the maximum temporal flow in a $\big( G(L)=(V,E,L),\source, \sink, c \big)$ temporal flow network in time $T(n',m')$, where $n' \leq n + 2|E_L|$ and $m' \leq n+ 3|E_L|$.
\end{lemma}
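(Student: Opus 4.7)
The plan is to derive this lemma directly as a corollary of Lemma~\ref{lem:steg_flow} together with the explicit size bounds on $\mathrm{STEG}(L)$ established during its construction. By Lemma~\ref{lem:steg_flow}(1), the value of the maximum temporal flow in $\big(G(L),\source,\sink,c\big)$ equals the value of a maximum static $\source\mhyphen\sink$ flow rate in $\mathrm{STEG}(L)$, and by Lemma~\ref{lem:steg_flow}(2) the correspondence $f\leftrightarrow f_R$ preserves feasibility. Hence, to compute $\max_f v(f)$, it suffices to build $\mathrm{STEG}(L)$, invoke the black-box algorithm $A$ on it, and then read off the sought temporal flow values $f(e,l)$ from the flows that $A$ assigns to the crossing edges of $\mathrm{STEG}(L)$.

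The remaining content of the lemma is pure bookkeeping of the size of $\mathrm{STEG}(L)$. I would recall, as noted at the end of Section~\ref{sec:time_ext}, that each vertex $v\in V$ contributes the initial copy $v_0$ and at most one further copy for each time edge incident to $v$ as a head or as a tail, so $|V'|\leq |V|+2|E_L|$; and that the edges of $\mathrm{STEG}(L)$ are the $|E_L|$ crossing edges together with the per-vertex chains of vertical edges (one fewer than the number of copies of that vertex), so $|E'|\leq |V|+3|E_L|$. Setting $n':=|V'|$ and $m':=|E'|$ then yields precisely $n'\leq n+2|E_L|$ and $m'\leq n+3|E_L|$, so running $A$ on $\mathrm{STEG}(L)$ costs $T(n',m')\leq T(n+2|E_L|,\,n+3|E_L|)$.

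The only thing left to verify is that the translation between $G(L)$ and $\mathrm{STEG}(L)$ is not itself the bottleneck: a single pass over the time edges, with the labels at each vertex kept in sorted order, constructs $\mathrm{STEG}(L)$ and the inverse map back to temporal flow values in time polynomial in $n+|E_L|$, which is absorbed into $T(n',m')$ for any standard max-flow algorithm. There is no real obstacle here; the substantive work was done when proving Lemma~\ref{lem:steg_flow} and when bounding the size of $\mathrm{STEG}(L)$, so once those are in hand the present lemma is immediate.
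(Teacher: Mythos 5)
Your proposal is correct and takes essentially the same route as the paper: the paper's proof simply runs $A$ on $\mathrm{STEG}(L)$, relying on Lemma~\ref{lem:steg_flow} for correctness and on the bounds $|V'|\leq |V|+2|E_L|$ and $|E'|\leq |V|+3|E_L|$ already established in the construction of $\mathrm{STEG}(L)$. Your additional remarks on building $\mathrm{STEG}(L)$ in polynomial time and reading the temporal flow off the crossing edges are sound bookkeeping that the paper leaves implicit.
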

\begin{proof}
We run A on the static network $\mathrm{STEG}(L)$ of $n'$ vertices and $m'$ edges. Note that $\mathrm{STEG}(L)$ is, by construction, acyclic.
\end{proof}

\begin{note*}
In contrast to all the dynamic flows literature, our simplified time-extended network has size (number of nodes and edges) \emph{linear} on the input size of $G(L)$, and \emph{not exponential}.
\end{note*}

The following is a direct corollary of the construction of the Simplified Time-Extended Graph and shows that any temporal flow from $\source$ to $\sink$ (in temporal flow networks with unbounded node buffers) can be decomposed into temporal flows on some $\source \to \sink$ journeys.

\begin{corollary}[Journeys flow decomposition]\label{cor:decomp}
Let $\big( G(L)=(V,E,L),\source, \sink, c \big)$ be a temporal flow network on a directed graph $G$. Let $f$ be a temporal flow in $G(L)$ ($f$ is given by the values of $f(e,l)$ for the time-edges $(e,l) \in E_L$). Then, there is a collection of $\source \to \sink$ journeys $j_1,j_2,\ldots,j_k$ such that:
\begin{enumerate}
\item $k \leq |E_L|$
\item $v(f)=v(f_1)+ \ldots v(f_k)$
\item $f_i$ sends positive flow only on the time-edges of $j_i$
\end{enumerate}
\end{corollary}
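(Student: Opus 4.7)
My plan is to lift the temporal flow $f$ into the acyclic network $\mathrm{STEG}(L)$, apply the standard flow-decomposition theorem there, and then project the resulting path flows back to journey flows in $G(L)$.

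By Lemma~\ref{lem:steg_flow}, $f$ corresponds to a feasible static flow $f_R$ in $\mathrm{STEG}(L)$ with $v(f_R)=v(f)$. Since $\mathrm{STEG}(L)$ is a DAG (as noted in the proof of Lemma~\ref{lem:path_num}), the classical flow-decomposition theorem gives $f_R=\sum_{i=1}^{k} f_{R,i}$, where each $f_{R,i}$ is a positive flow supported on a single directed path $P_i$ from the first copy of $\source$ to the last copy of $\sink$. The first key observation is that every such $P_i$ projects to a $\source\to\sink$ journey in $G(L)$: crossing edges in $\mathrm{STEG}(L)$ are in bijection with the time edges in $E_L$, and along any directed path in $\mathrm{STEG}(L)$ the time coordinate of the vertex copies is strictly monotone increasing (vertical edges move forward within a single column, crossing edges move from time $l$ to $l+tt$). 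Hence the crossing edges of $P_i$, read in order, produce a sequence of time edges with strictly increasing labels, i.e., a journey $j_i$. Projecting $f_{R,i}$ back, we obtain a temporal flow $f_i$ whose positive support lies entirely on the time edges of $j_i$, with $v(f_i)=v(f_{R,i})$, so $v(f)=\sum_i v(f_i)$.

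The only remaining point is the bound $k\le |E_L|$. The vanilla DAG-decomposition argument only gives $k\le |E'|=O(|V|+|E_L|)$, since each iteration removes at least one edge from the support but that edge could be a (necessarily infinite-capacity) vertical edge. I would sharpen this by performing the decomposition greedily so that each iteration saturates a crossing edge: if the bottleneck of the chosen path $P_i$ is a vertical edge between two copies $v_x$ and $v_y$ of the same node, then by conservation the flow on that vertical edge equals the net inflow into $v_x$ (resp.\ outflow from $v_y$) through crossing edges; one of those crossing edges carries at most the same amount and can be spliced into the path, yielding a new $s$-$t$ path whose bottleneck is a crossing edge. Thus each iteration decreases the number of crossing edges with positive flow by at least one, and the procedure terminates after at most $|E_L|$ steps. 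This greedy refinement is the main subtlety of the argument; the rest is a direct application of standard flow decomposition combined with the structural correspondence between STEG paths and journeys.
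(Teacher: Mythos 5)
Your overall route --- lift $f$ to the static acyclic network $\mathrm{STEG}(L)$ via Lemma~\ref{lem:steg_flow}, apply classical path decomposition there, and project each path back to a journey --- is exactly the argument the paper intends: it states the corollary as a direct consequence of the $\mathrm{STEG}$ construction and writes out no further proof. Your projection step is correct and worth having made explicit: along any directed path in $\mathrm{STEG}(L)$ the time indices strictly increase, and since out-copies sit at integer times $l$ while in-copies sit at times $l+tt$ with $0<tt<1$, consecutive crossing edges on the path have labels $l' \geq l+1 > l$, so the crossing edges of a path, read in order, do form a journey.

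The gap is in your sharpening to $k \le |E_L|$. The claim ``by conservation the flow on that vertical edge equals the net inflow into $v_x$ through crossing edges; one of those crossing edges carries at most the same amount'' is false: the vertical flow is a \emph{net} quantity (telescoped over all copies of $v$ up to $v_x$), and a small net can arise from large individual crossing flows. Concretely, let $5$ units enter $v$ on day $1$, let $4$ leave on day $2$, let $6$ enter on day $2$, and let $7$ leave on day $3$; the vertical edges of $v$'s column then carry $5$, $1$, $7$, so a path entering on day $1$ and leaving on day $3$ has bottleneck $1$ on the middle vertical edge, while the crossing edges at the relevant copies carry $5$ and $4$ --- neither carries ``at most the same amount,'' and no single splice visibly produces a path whose bottleneck is a crossing edge; your termination claim (``each iteration decreases the number of crossing edges with positive flow by at least one'') therefore does not follow from the argument as written. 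The statement you need is nevertheless true, via a different selection rule: in each round let $e^*$ be the crossing edge of minimum positive flow $\mu^*$ \emph{among all crossing edges}. Then every crossing edge in the support carries flow $\geq \mu^*$, and conservation at each copy lets you extend $e^*$ backward to the first copy of $\source$ and forward to the last copy of $\sink$ using only edges of flow $\geq \mu^*$: at each copy total inflow equals total outflow, each incident crossing edge carries $0$ or at least $\mu^*$, so either a crossing edge or the vertical continuation carries $\geq \mu^*$, and the backward walk cannot terminate at $v_0$ for $v \neq \source$ since such copies have no inflow. Subtracting $\mu^*$ along this path zeroes $e^*$ while preserving feasibility, so each round removes at least one crossing edge from the support and $k \leq |E_L|$ follows. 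With this repair your proof is complete and coincides in spirit with the paper's intended one.
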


\section{Mixed Temporal Networks and their hardness}

Mixed temporal networks of the form $G(E_1,E_2,\alpha)$ (see Definition~\ref{def:mixed}) can model practical cases, where some edge availabilities are exactly specified, while some other edge availabilities are randomly chosen (due to security reasons, faults, etc.); for example, in a water network, one may have planned disruptions for maintenance in some water pipes, but unplanned (random) disruptions in some others. With some edges being available at random times, the value of the maximum temporal flow (until time $\alpha$) now becomes a random variable.

In this section, we focus our attention to temporal networks that either have all their labels chosen uniformly at random, or are (fully) mixed.

\subsection{Temporal Networks with random availabilities that are flow cutters}\label{sec:random_journeys}

We study here a special case of the mixed temporal networks $G(E_1,E_2,\alpha)$, where $E_1= \emptyset$, i.e., \emph{all} the edges in the network become available at random time instances. We partially characterise such networks that eliminate the flow that arrives at $\sink$.

Let $G=(V,E)$ be a directed graph of $n$ vertices with a distinguished source, $\source$, and a distinguished sink, $\sink$. Suppose that each edge $e \in E$ is available only at a \emph{unique} moment in time (i.e., day) \emph{selected uniformly at random} from the set $\{1,2, \ldots, \alpha\}$, for some even\footnote{We choose an even integer to simplify the calculations. However, with careful adjustments to the calculations, the results would still hold for an arbitrary integer.} integer $\alpha\geq 1$; suppose also that the selections of the edges' labels are independent. Let us call such a network a Temporal Network with unique random availabilities of edges, and denote it by $\mathrm{URTN}(\alpha)$.

\begin{lemma}\label{lem:factorial_jour}
Let $P_k$ be a directed $\source \to \sink$ path of length $k$ in $G$. Then, $P_k$ becomes a journey in $\mathrm{URTN}(\alpha)$ with probability at most $\frac{1}{k!}$.
\end{lemma}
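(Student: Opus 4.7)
The plan is to reduce the statement to a clean combinatorial count about $k$ i.i.d.\ uniform labels on $\{1,\dots,\alpha\}$. First I would observe that, by the definition of $\mathrm{URTN}(\alpha)$, the labels $l_1,\dots,l_k$ assigned to the $k$ successive edges $e_1,\dots,e_k$ of $P_k$ are independent, each uniform on $\{1,\dots,\alpha\}$. By the definition of a journey, $P_k$ becomes a journey exactly when $l_1 < l_2 < \cdots < l_k$ strictly; so the probability in question is
\[
\Pr[P_k \text{ is a journey}] \;=\; \Pr[l_1 < l_2 < \cdots < l_k].
\]

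Next I would count favourable outcomes directly. The total number of label assignments is $\alpha^k$. A strictly increasing sequence is determined by its (necessarily distinct) image, so the number of favourable assignments equals the number of $k$-element subsets of $\{1,\dots,\alpha\}$, namely $\binom{\alpha}{k}$ (if $k>\alpha$ this is $0$, and the bound is trivial). Hence
\[
\Pr[P_k \text{ is a journey}] \;=\; \frac{\binom{\alpha}{k}}{\alpha^k} \;=\; \frac{\alpha(\alpha-1)\cdots(\alpha-k+1)}{k!\,\alpha^k} \;\leq\; \frac{\alpha^k}{k!\,\alpha^k} \;=\; \frac{1}{k!},
\]
which is the desired bound. (An alternative, perhaps more conceptual, route is to condition on the event $T$ that all $k$ labels are distinct: given $T$, by exchangeability each of the $k!$ orderings of the labels is equally likely, so the conditional probability of strict increase is $1/k!$; multiplying by $\Pr[T]\leq 1$ gives the same inequality.)

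There is essentially no obstacle here: the only subtlety is that, because labels are drawn with repetition, one might be tempted to claim the exact value $1/k!$, which holds only in the continuous/distinct-label regime. The possibility of ties is exactly what makes the inequality in the lemma not an equality, and both proofs above handle it cleanly by either (i) comparing $\binom{\alpha}{k}$ with $\alpha^k/k!$, or (ii) absorbing $\Pr[T]$ into an inequality.
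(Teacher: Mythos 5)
Your proposal is correct, and your primary argument takes a genuinely different route from the paper. The paper argues by conditioning: it splits on the event $\mathcal{D}$ that all $k$ labels are distinct, notes that ties make a journey impossible (so the tie term vanishes), and then uses exchangeability --- given $\mathcal{D}$, all $k!$ orderings of the labels are equally likely and only the increasing one yields a journey --- to get $\Pr[\mathcal{E}] = \Pr[\mathcal{E}\mid\mathcal{D}]\Pr[\mathcal{D}] \leq 1/k!$. This is precisely your parenthetical alternative. Your main proof instead counts outcomes directly and obtains the \emph{exact} probability $\binom{\alpha}{k}/\alpha^k$, from which the bound follows via $\binom{\alpha}{k} \leq \alpha^k/k!$. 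Each approach buys something: your exact formula quantifies the slack caused by ties (making precise why the lemma is a strict inequality for finite $\alpha$, a point the paper only touches on informally in a later remark about the failure of the naive KNAPSACK reduction), and it immediately gives probability $0$ when $k > \alpha$; the paper's symmetry argument, by contrast, uses no computation and generalizes verbatim to labels drawn i.i.d.\ from \emph{any} distribution on the integers, since it relies only on exchangeability rather than uniformity. One small hypothesis you use implicitly, as does the paper: the edges of $P_k$ are distinct (true since a path is simple), which is what justifies the independence of $l_1,\dots,l_k$.
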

\begin{proof}

For a particular $\source \to \sink$ path $P_k$ of length $k$, let $\mathcal{E}$ be the event that ``$P_k$ is a journey'', $\mathcal{D}$ be the event that ``all $k$ labels on $P_k$ are different'' and $\mathcal{S}$ be the event that ``at least $2$ out of the $k$ labels on $P_k$ are equal''. Then, we have:
\begin{eqnarray*}
Pr[\mathcal{E}] & =    & Pr[\mathcal{E} | \mathcal{D}] \cdot Pr[\mathcal{D}]  +  Pr[\mathcal{E} | \mathcal{S}] \cdot Pr[\mathcal{S}] \\
														 & =    & Pr[\mathcal{E} | \mathcal{D}] \cdot Pr[\mathcal{D}]\\
														 & \leq & Pr[\mathcal{E} | \mathcal{D}]
\end{eqnarray*}
Now, each particular \emph{set} of $k$ different labels in the edges of $P_k$ is equiprobable. But for each such set, all permutations of the $k$ labels are equiprobable and only one is a journey, i.e., has increasing order of labels. Therefore:
\[  Pr[P_k \text{ is a journey}] \leq \frac{1}{k!} \text{.} \]
\end{proof}

Now, consider directed graphs as described above, in which the distance from $\source$ to $\sink$ is at least $c\log{n}$, for a constant integer $c>2$; so any directed $\source \to \sink $ path has at least $c\log{n}$ edges. Let us call such graphs `` $c$-long $\source \to \sink$ graphs'' or simply $c$-long. A $c$-long $\source \to \sink$ graph is called \emph{thin} if the number of simple directed $\source \to \sink$ paths is at most $n^\beta$, for some constant $\beta$.

\begin{lemma}
Consider a $URTN(\alpha)$ with an underlying graph $G$ being any particular $c$-long and thin digraph. Then, the probability that the amount of flow from $\source$ arriving at $\sink$ is positive tends to zero as $n$ tends to $+\infty$.
\end{lemma}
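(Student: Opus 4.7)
The plan is to combine the journey-flow decomposition with a union bound over all simple $\source \to \sink$ paths, using the $1/k!$ estimate from the preceding lemma. The first observation is that by the Journeys flow decomposition corollary, any temporal flow in a network with unbounded node buffers decomposes into flows supported on $\source \to \sink$ journeys. In particular, strictly positive flow reaches $\sink$ \emph{if and only if} at least one $\source \to \sink$ journey exists in the realised $\mathrm{URTN}(\alpha)$. So it suffices to prove that, asymptotically almost surely, the random labelling produces no $\source \to \sink$ journey at all.

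Next I would reduce the set of witnesses to simple paths. Given any $\source \to \sink$ journey that revisits some vertex $u$ at times $l_1 < l_2$, deleting the intermediate closed walk and directly continuing with the next edge out of $u$ (which carries a label $> l_2 > l_1$) preserves the strictly-increasing-label property. Iterating, every journey contains a simple sub-journey, so the event ``at least one $\source \to \sink$ journey exists'' coincides with ``at least one simple $\source \to \sink$ path becomes a journey''. By thinness, there are at most $n^{\beta}$ such simple paths; by $c$-longness, each has length $k \ge c \log n$.

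Then I apply the union bound together with Lemma~\ref{lem:factorial_jour}, which bounds the probability that a fixed simple path of length $k$ becomes a journey by $1/k!$. Since $k \mapsto 1/k!$ is decreasing, each term is at most $1/(c \log n)!$, giving
\[
\Pr[\text{flow at } \sink > 0] \;\le\; n^{\beta} \cdot \frac{1}{(c \log n)!}.
\]
Stirling's approximation yields $\log\bigl((c\log n)!\bigr) = c(\log n)(\log\log n)\,(1+o(1))$, so
\[
n^{\beta}/(c \log n)! \;\le\; \exp\!\bigl(\beta \log n - (1+o(1))\,c(\log n)(\log\log n)\bigr),
\]
which tends to $0$ as $n \to \infty$ because $(\log\log n) \to \infty$ dominates the constant $\beta$.

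The only delicate point I anticipate is the reduction to \emph{simple} paths, since journeys are inherently walk-like objects; one must verify that shortcutting a cycle inside a journey never destroys the strictly-increasing label condition, which it does not precisely because the label on the post-cycle outgoing edge is larger than every label used in the cycle. Everything else is a straightforward union bound plus a Stirling estimate, and in fact the argument goes through for any constant $c>0$; the hypothesis $c>2$ gives slack well beyond what is needed.
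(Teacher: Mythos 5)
Your proof is correct and takes essentially the same route as the paper's: a union bound over the at most $n^{\beta}$ simple $\source \to \sink$ paths, each of which is a journey with probability at most $1/(c\log n)!$ by the preceding lemma, followed by the observation that $n^{\beta}/(c\log n)! \to 0$. The only differences are presentational --- you make explicit the (valid) shortcutting argument reducing an arbitrary journey to a simple-path journey, which the paper leaves implicit, and you use Stirling's estimate where the paper uses the elementary bound $(\log n)! \geq \left(\frac{\log n}{2}\right)^{\frac{\log n}{2}}$; your remark that any constant $c>0$ suffices is also accurate.
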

\begin{proof}
The event $E_1 = $``at least one $\source \to \sink$ path is a journey in $URTN(\alpha)$'' is a prerequisite for a positive flow from $\source$ arriving at $\sink$. So, 
\begin{eqnarray}
\notag Pr[\text{flow arriving at } \sink > 0] &=& Pr[\exists \source \to \sink \text{ simple path in } G \text{ which is a journey}]\\
\notag                                   & \leq & n^\beta Pr[\text{any specific simple path in } G \text{ is a journey}]  \\
	                                 & \leq & n^\beta \frac{1}{(c\log{n})!} \text{,}
\end{eqnarray}\label{eqn:my_eq}
by Lemma~\ref{lem:factorial_jour} and since every $\source \to \sink$ path in $G$ has length at least $c\log{n}$. It holds that $c! \geq \left(\frac{c}{2} \right)^\frac{c}{2}$ and that $(\log{n})! \geq \left( \frac{\log{n}}{2}\right)^\frac{\log{n}}{2}$. Therefore, relation~\ref{eqn:my_eq} becomes:

\[Pr[\text{flow arriving at } \sink > 0] \leq \frac{1}{\left(\frac{c}{2} \right)^\frac{c}{2}} \cdot \frac{n^\beta \sqrt{n}}{(\log{n})^\frac{\log{n}}{2}}\]
But, $n^\beta \sqrt{n} = o(\log{n})^\frac{\log{n}}{2}$ for $n$ large enough, so the Lemma holds.
\end{proof}

Randomly labelled $c$-long and thin graphs is not the only case of temporal networks that disallows flow to arrive to $\sink$ asymptotically almost surely.
\begin{definition}
A cut $C$ in a (traditional) flow network $G$ is a set of edges, the removal of which from the network leaves no directed $\source \to \sink$ paths in $G$.
\end{definition}
\begin{definition}
A cut $C_1$ \emph{precedes} a cut $C_2$ in a flow network $G$ (denoted by $C_1 \to C_2$) if any directed $\source \to \sink$ path that goes through an edge in $C_1$ must also later go through an edge in $C_2$.
\end{definition}
\begin{definition}[Multiblock graphs]
A flow network is called a $(c,d)$-multiblock graph if it has at least $c\log{n}$ disjoint cuts $C_1, \ldots, C_{c\log{n}}$ such that $C_i \to C_{i+1}, ~i=1, \ldots, c\log{n}-1$, and for all $i=1, \ldots, c\log{n}$, $|C_i| \leq d$, for some constants $c,d>2$.
\end{definition}

Note that $(c,d)$-multiblocks and ($c$-long,thin)-graphs are two different graph classes. Figure~\ref{fig:c2multiblock} shows a $(c,2)$-multiblock of $n = c\sqrt{k}+2, ~k \in\mathbb{N}$, vertices which is not thin.
\begin{figure}[htb]
\centering\includegraphics[scale=0.6]{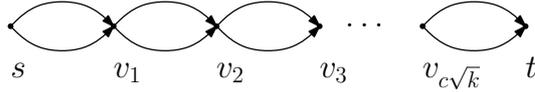}
\caption{A $(c,2)$-multiblock which is not thin.}
\label{fig:c2multiblock}
\end{figure}

\begin{lemma}
Consider a $URTN(\alpha)$ with an underlying graph $G$ being any particular $(c,d)$-multiblock. Then, the probability that the amount of flow from $\source$ arriving at $\sink$ is positive tends to zero as $n$ tends to $+\infty$.
\end{lemma}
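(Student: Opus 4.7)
The plan is to combine the cut structure of the $(c,d)$-multiblock with the counting argument already used in Lemma~\ref{lem:factorial_jour}. The key observation is that any $\source \to \sink$ journey in $G$ must traverse an edge from every cut $C_i$, and the ordering forced by $C_1 \to C_2 \to \cdots \to C_{c\log n}$ means the corresponding edge labels must form a strictly increasing sequence of length $c\log n$.

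First I would argue that a positive amount of flow can reach $\sink$ only if at least one $\source \to \sink$ journey exists in the realized $URTN(\alpha)$. Fix such a (hypothetical) journey $J$. Since each $C_i$ is a cut of $G$, $J$ contains at least one edge $e_i \in C_i$ for every $i = 1,\ldots,c\log n$. Because $C_i \to C_{i+1}$, the edges $e_1,e_2,\ldots,e_{c\log n}$ appear on $J$ in this order, and since $J$ is a journey, their labels $\ell(e_1) < \ell(e_2) < \cdots < \ell(e_{c\log n})$ must be strictly increasing.

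Next I would apply a union bound. For each tuple $(e_1,\ldots,e_{c\log n}) \in C_1 \times C_2 \times \cdots \times C_{c\log n}$ there are at most $d^{c\log n}$ choices, since $|C_i| \leq d$. For each such fixed tuple, repeating the conditioning argument of Lemma~\ref{lem:factorial_jour} (condition on the event that the $c\log n$ labels are pairwise distinct, then use the fact that only one of the $(c\log n)!$ permutations is strictly increasing), the probability of a strictly increasing label sequence is at most $\frac{1}{(c\log n)!}$. Thus
\[
\Pr[\text{flow arriving at } \sink > 0] \;\leq\; \frac{d^{c\log n}}{(c\log n)!} \;=\; \frac{n^{c\log d}}{(c\log n)!}.
\]

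Finally I would show this bound vanishes. Using $(c\log n)! \geq \bigl(\tfrac{c\log n}{2}\bigr)^{c\log n /2}$ (Stirling-type estimate, as in the previous lemma), the denominator grows like $(\log n)^{\Theta(\log n)}$, which is super-polynomial in $n$, while the numerator is only polynomial; therefore the ratio tends to $0$ as $n \to \infty$. The main obstacle in the argument is really just step two: justifying that the cut-traversal order forced by $C_i \to C_{i+1}$ translates into a strictly monotone label constraint on \emph{disjoint} edges drawn from each cut; everything else is a direct reuse of machinery already developed in the paper.
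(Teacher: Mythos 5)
Your proof is correct, but it follows a genuinely different route from the paper's. The paper does not use the permutation bound here at all: for each consecutive pair $C_i \to C_{i+1}$ it defines the constant-probability ``blocking'' event $A_i$ (all labels in $C_i$ at least $\alpha/2$ and all labels in $C_{i+1}$ at most $\alpha/2$), shows $Pr[A_i] \geq (1/2)^{2d}$, observes that any single $A_i$ already forbids every $\source \to \sink$ journey, and then restricts to the odd-indexed events $A_1, A_3, \ldots$ so that they involve disjoint edge sets and are therefore independent, yielding $Pr[\text{flow arriving at } \sink > 0] \leq \bigl(1-(1/2)^{2d}\bigr)^{\Theta(\log n)} \to 0$. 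Your union bound over tuples in $C_1 \times \cdots \times C_{c\log n}$, combined with the $1/k!$ estimate recycled from Lemma~\ref{lem:factorial_jour}, buys a substantially stronger bound --- superpolynomial decay $n^{c\log d}/(c\log n)!$ versus the paper's polynomial decay $n^{-\Theta(4^{-d})}$ --- and it dispenses with the alternate-pair independence trick entirely (disjointness of the cuts is still needed, but only so that the $c\log n$ tuple edges are distinct and their labels i.i.d.). What the paper's route buys in exchange is a weaker demand on the precedence structure: it only needs, for each consecutive pair, that \emph{some} edge of $C_i$ is traversed before \emph{some} edge of $C_{i+1}$, whereas your extraction of a full monotone tuple $e_1,\ldots,e_{c\log n}$ requires the stronger (but natural, and surely intended) reading of $C_i \to C_{i+1}$: after \emph{each} traversal of an edge of $C_i$, the path later traverses an edge of $C_{i+1}$, under which the chain follows by an easy induction starting from the cut property of $C_1$. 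You rightly flagged this as the delicate step --- under the weaker ``some-before-some'' reading one can construct traversal orders (e.g., cut memberships appearing along a path in the order $C_2, C_3, C_1, C_2$) where each pairwise precedence holds yet no monotone tuple exists --- so your write-up should pin down that reading in one sentence before chaining; with that done, every step (journey prerequisite, tuple count $d^{c\log n} = n^{c\log d}$, and $(c\log n)! = (\log n)^{\Theta(\log n)}$ dominating every polynomial in $n$) is sound.
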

\begin{proof}
For positive flow to arrive to $\sink$ starting from $\source$, it must be that if $C_i \to C_{i+1}$ then at least one edge availability in $C_{i+1}$ is larger than the smallest edge availability in $C_i$. Note that for every $C_i$, the probability that all labels in $C_i$ are at least $\frac{\alpha}{2}$ is $\left(\frac{1}{2}+\frac{1}{\alpha}\right)^{|C_i|}$, i.e., a constant. Also, for every $C_i$, the probability that all labels in $C_i$ are at most $\frac{\alpha}{2}$ is $\left(\frac{1}{2}\right)^{|C_i|}$, i.e., a constant.

Now, given a consecutive pair of cuts $C_i \to C_{i+1}$, let $E_{i,\geq}$ be the event that all labels in $C_i$ are at least $\frac{\alpha}{2}$ and $E_{i+1,\leq}$ be the event that all labels in $C_{i+1}$ are at most $\frac{\alpha}{2}$. Let $A_i$ be the conjunction of $E_{i,\geq}$ and $E_{i+1,\leq}$. It holds that:
\begin{eqnarray*}
\notag Pr[A_i] = Pr[E_{i,\geq} \wedge E_{i+1,\leq}] &=& Pr[E_{i,\geq}]\cdot Pr[E_{i+1,\leq}] \\
														 &\geq& \left( \frac{1}{2} + \frac{1}{\alpha} \right)^{|C_i|} \cdot \left( \frac{1}{2} \right)^{|C_{i+1}|} \\
																		&\geq& \left( \frac{1}{2}  \right)^{2d}.
\end{eqnarray*}

But, the conjunction of $E_{i,\geq}$ and $E_{i+1,\leq}$ implies that no flow arrives at $\sink$ starting from $\source$. Now, consider the events: $S=\{ A_1, A_3, A_5, \ldots, A_{r}\}$, where $r$ is the largest odd number that is smaller than $c \log{n}$; note that $r=\Theta(\log{n})$. Those events are independent since there is no edge overlap in any of them; therefore, the random label choices in any one consecutive pair of cuts does not affect the choices in the next pair. We have:
\begin{eqnarray*}
Pr[\text{flow arriving at } \sink > 0] &\leq& Pr[\text{all events in }S\text{ fail}] \\
																			 & =  & \prod_{A_j \in S} Pr[A_j \text{ fails}]\\
															&  \leq& \left( 1- \left( \frac{1}{2} \right)^{2d} \right)^{\Theta(\log{n})} \xrightarrow{n\rightarrow+\infty} 0.
\end{eqnarray*}
This completes the proof of the Lemma.
\end{proof}

\subsection{The complexity of computing the expected maximum temporal flow}\label{sec:mixed_net}

We consider here the following problem:
\begin{prob*}[Expected Maximum Temporal Flow]
What is the time complexity of computing the \emph{expected value} of the maximum temporal flow, $v$, in $G(E_1,E_2,\alpha)$?
\end{prob*}

Let us recall the definition of the class of functions \pmb{\#P}:
\begin{definition}
\cite[p.441]{papadimitriou} Let $Q$ be a polynomially balanced, polynomial-time decidable binary relation. The \emph{counting problem} associated with $Q$ is: Given $x$, how many $y$ are there such that $(x,y) \in Q$? \pmb{\#P} is the class of all counting problems associated with polynomially balanced polynomial-time decidable functions.
\end{definition}
Loosely speaking, a problem is said to be \pmb{\#P}-hard if a polynomial-time algorithm for it implies that  \pmb{\#P} $=$ \pmb{FP}, where \pmb{FP} is the set of functions from $\{0,1\}^*$ to $\{0,1\}^*$ computable by a deterministic polynomial-time Turing machine\footnote{$\{0,1\}^* = \cup_{n \geq 0} \{0,1\}^n$, where $\{0,1\}^n$ is the set of all strings (of bits $0,1$) of length $n$}. For a more formal definition, see~\cite{papadimitriou}.

We now show the following:
\begin{lemma}\label{lem:sharp_complete}
Given an integer $C>0$, it is \pmb{\#P}-hard to compute the probability that the maximum flow value $v$ in $G(E_1,E_2,\alpha)$ is at most $C$, $Pr[v \leq C]$.
\end{lemma}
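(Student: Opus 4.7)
The plan is to reduce the \pmb{\#P}-complete problem \#KNAPSACK to the computation of $Pr[v \leq C]$. Recall that \#KNAPSACK asks, given positive integer weights $w_1, \ldots, w_n$ and a threshold $C$, for the number of subsets $S \subseteq \{1, \ldots, n\}$ with $\sum_{i \in S} w_i \leq C$; this counting problem is \pmb{\#P}-complete by Valiant.

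Given such an instance I would construct a mixed temporal network $G(E_1, E_2, \alpha)$ exactly of the form pictured in Figure~\ref{fig:mixed_net}. The vertex set is $\{\source, \sink, v_1, \ldots, v_n\}$ and for each $i$ there are two edges $(\source, v_i)$ and $(v_i, \sink)$, both of capacity $w_i$. Set $\alpha = 2$, place each $(\source, v_i)$ in $E_1$ with the specified label $1$, and place each $(v_i, \sink)$ in $E_2$, so that its label is drawn uniformly and independently from $\{1,2\}$. This construction is clearly computable in time polynomial in the \#KNAPSACK input size.

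Next I would pin down the distribution of the maximum temporal flow. The $i$-th path is an $\source \to \sink$ journey iff the random label of $(v_i, \sink)$ is strictly greater than $1$, i.e.\ equals $2$, which happens with probability exactly $1/2$, independently across $i$. Because the $n$ paths are vertex-disjoint and both edges of path $i$ share capacity $w_i$, the journey decomposition of Corollary~\ref{cor:decomp} forces the maximum temporal flow to be
\[ v = \sum_{i=1}^n w_i X_i, \]
where $X_1, \ldots, X_n$ are independent Bernoulli$(1/2)$ random variables. Consequently
\[ Pr[v \leq C] \;=\; \frac{1}{2^n} \cdot \bigl|\{\, S \subseteq \{1, \ldots, n\} : \textstyle\sum_{i \in S} w_i \leq C \,\}\bigr|. \]
Therefore a polynomial-time algorithm for $Pr[v \leq C]$ would, after multiplication by $2^n$, yield a polynomial-time algorithm for \#KNAPSACK, establishing the \pmb{\#P}-hardness claim.

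The main obstacle I anticipate is arranging that each Bernoulli indicator has probability \emph{exactly} $1/2$ while respecting the definition of mixed temporal networks, in which $E_2$-labels are drawn independently and uniformly from $\{1, \ldots, \alpha\}$ but are not required to be distinct. Using an $E_1$-edge as a fixed ``threshold'' label is the cleanest way I see to sidestep the order-statistic subtlety that would arise with two random labels (naively one would get $Pr[L_1 < L_2] = (\alpha-1)/(2\alpha) \neq 1/2$). Everything else in the argument is routine bookkeeping, and the same construction extends to any even $\alpha$ by placing the fixed label at $\alpha/2$, which should give flexibility for related hardness results about tails and expectations of $v$.
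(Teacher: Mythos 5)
Your proof is correct, but it takes a genuinely different route from the paper's. The paper keeps $E_1=\emptyset$ (all labels random) and so must confront the issue you correctly anticipated: with two random labels per path, the journey probability is $p_i=\frac{1}{2}-\frac{1}{2\alpha}\neq\frac{1}{2}$ because of possible ties, so $Pr[v\leq C]$ is not exactly $T/2^n$. The paper resolves this not by fixing a threshold label but by a sandwich argument: it writes $Pr[v\leq C]$ as a sum over weight vectors, factors out $\left(\frac{1}{2}+\frac{1}{2\alpha}\right)^n$, bounds the residual factor via Bernoulli's inequality, and obtains $T-\varepsilon\leq Pr[v\leq C]/\left(\frac{1}{2}+\frac{1}{2\alpha}\right)^n\leq T$ with $\varepsilon=\frac{2nT}{\alpha+1}<1$ after choosing $\alpha$ exponentially large (e.g.\ $\alpha=2^n$); since $T$ is an integer, an exact polynomial-time algorithm for $Pr[v\leq C]$ would pin down $T$. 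Your construction instead exploits the mixed-network feature itself, placing the edges $(\source,v_i)$ in $E_1$ with a fixed label (and the generalization to label $\alpha/2$ for arbitrary even $\alpha$ is exactly right), so that each indicator is Bernoulli with parameter exactly $\frac{1}{2}$ and the identity $Pr[v\leq C]=T/2^n$ holds on the nose --- the paper's own Remark after its proof observes that with $p_i=\frac{1}{2}$ the reduction to KNAPSACK counting is immediate, which is precisely the situation you engineer. What each approach buys: yours is more elementary, exact, and works with constant $\alpha=2$; the paper's is more involved but establishes hardness even for the subclass of instances with $E_1=\emptyset$, i.e.\ fully random labellings. One small point to note if you carry your version forward: the paper's subsequent theorem on truncated expectations composes Lemma~\ref{lem:sharp_complete} with an extra fixed-label edge $(t_1,\sink)$; your variant composes just as well (put the $(\source,v_i)$ edges in $E_1$ there too), and your appeal to Corollary~\ref{cor:decomp} should strictly be paired with the trivial feasibility direction (routing $w_i$ along each realized journey), which you implicitly have since the paths are vertex-disjoint.
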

\begin{proof}
Recall that if $J=\{w_1,\ldots, w_n\}$ is a set of $n$ positive integer weights and we are given an integer $C \geq \sum_{i=1}^n \frac{w_i}{2}$, then the problem of computing the number, $T$, of subsets of $J$ with total weight at most $C$ is \pmb{\#P}-hard, because it is equivalent to counting the number of feasible solutions of the corresponding KNAPSACK instance~\cite{papadimitriou}.

Consider now the temporal flow network of Figure~\ref{fig:mixed_net'} where there are $n$ directed disjoint two-edge paths from $\source$ to $\sink$. For the path with edges $e_i,e_i'$, via vertex $v_i$, the capacity of $e_i$ is $w_i$ and the capacity of $e_i'$ is $w_i' \geq w_i$. In this network, $E_1= \emptyset$ and $E_2=E$, i.e., the availabilities of every edge are chosen independently and uniformly at random from $\{1,\ldots, \alpha\}$. Also, assume that each edge selects a \emph{single} random label.

\begin{figure}[tbh]
\centering
\includegraphics[scale=0.45]{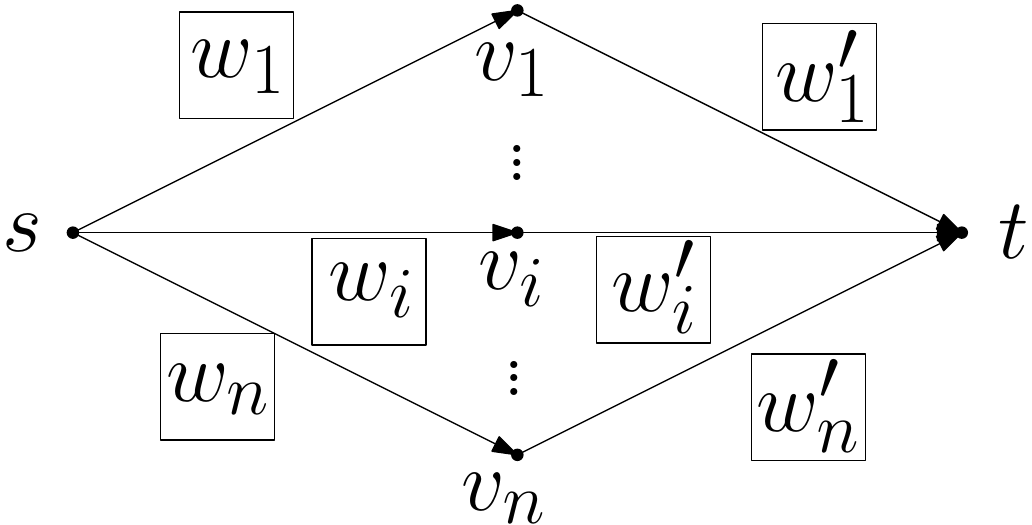}
\caption{The network structure we consider}
\label{fig:mixed_net'}
\end{figure}

Clearly, the value of the maximum temporal flow from $\source$ to $\sink$ until time $\alpha+tt$ is the sum of $n$ random variables $Y_i,~i=1,\ldots, n$, where $Y_i$ is the value of the flow through the $i^{th}$ path. $Y_i$ is, then, $w_i$ with probability $p_i= \frac{1}{2}- \frac{1}{2\alpha}$, which is equal to the probability that the label $l_{e_i}$ is smaller than the label $l_{e_i'}$, so that the path $(e_i,e_i')$ is a journey, and is zero otherwise. Then, $v=Y_1+ \ldots + Y_n$ and it holds that $Pr[v\leq C] = Pr[\sum_{i=1}^n Y_i \leq C]$.

Now, let $J_k$ be the set of all vectors, $(\rho_1,\ldots, \rho_n)$, of $n$ entries/weights in total, such that each $\rho_i$ is either $0$ or the corresponding $w_i$, and there are exactly $k$ positive entries in the vector.
Let $\vec{g} = (g_1, \ldots, g_n)$ be a specific assignment of weights to $Y_1,\ldots, Y_n$, respectively, i.e., $g_i=w_i$ with probability $\frac{1}{2}-\frac{1}{2\alpha}$ and, otherwise, $g_i=0$; notice that $\vec{g} \in J_k$, for some $k\in \{0,\ldots,n\}$. Then,
\begin{eqnarray}\label{eq:mixed}
\notag  Pr[v \leq C] & =& Pr[\sum_{i=1}^n Y_i \leq C]\\
				& =& \sum_{\vec{g}} Pr[Y_i = g_i, ~\forall i=1,\ldots,n] \cdot x(\vec{g}) \text{,}
\end{eqnarray}
where:
\[x(\vec{g}) = 
\begin{cases} 1 & \mbox{, if } \sum_{i=1}^n g_i \leq C \\     0 & \mbox{, otherwise.} \end{cases}
\]

For each particular $\vec{g}$ with exactly $k$ positive weights, the probability that it occurs is $\big( \frac{1}{2} - \frac{1}{2\alpha} \big)^k \big(\frac{1}{2} + \frac{1}{2\alpha} \big)^{n-k}$. So, from Equation~\ref{eq:mixed} we get:

\begin{eqnarray}\label{eq:mixed2}
\notag Pr[v\leq C] &=& \sum_{k=0}^n \sum_{\vec{g} \in J_k} x(\vec{g}) \big( \frac{1}{2} - \frac{1}{2\alpha} \big)^k \big(\frac{1}{2} + \frac{1}{2\alpha} \big)^{n-k} \\
&=& \big(\frac{1}{2} + \frac{1}{2\alpha} \big)^n \sum_{k=0}^n \sum_{\vec{g} \in J_k} x(\vec{g}) \left( \frac{\frac{1}{2} - \frac{1}{2\alpha}}{\frac{1}{2} + \frac{1}{2\alpha}} \right)^k
\end{eqnarray}

The following holds (using Bernoulli's inequality):
\begin{equation}\label{eq:mixed3}
1 \geq  \left( \frac{\frac{1}{2} - \frac{1}{2\alpha}}{\frac{1}{2} + \frac{1}{2\alpha}} \right)^k  \geq \left( \frac{\frac{1}{2} - \frac{1}{2\alpha}}{\frac{1}{2} + \frac{1}{2\alpha}} \right)^n 
= \left( \frac{\alpha-1}{\alpha+1} \right)^n = \big( 1-\frac{2}{\alpha+1} \big)^n \geq  1-\frac{2n}{\alpha+1}
\end{equation}

Let $T= \sum_{k=0}^n \sum_{\vec{g} \in J_k} x(\vec{g})$ and note that $T$ is exactly the number of subsets of $J=\{w_1, \ldots, w_n\}$ with total weight at most $C$. Then, we get from Equation~\ref{eq:mixed2} and Relation~\ref{eq:mixed3}:
\begin{equation*}
\begin{array}{rcccll}
\big(\frac{1}{2} + \frac{1}{2\alpha} \big)^n \big( 1-\frac{2n}{\alpha+1} \big) T  &\leq& Pr[v\leq C] &\leq& \big(\frac{1}{2} + \frac{1}{2\alpha} \big)^n T & \Leftrightarrow\\[0.5em]
\big( 1-\frac{2n}{\alpha+1} \big) T  &\leq&     Pr[v\leq C] \frac{1}{\big(\frac{1}{2} + \frac{1}{2\alpha} \big)^n} & \leq & T  &\Leftrightarrow\\[0.5em]
T-\frac{2nT}{\alpha+1}               &\leq&     \frac{Pr[v\leq C] }{\big(\frac{1}{2} + \frac{1}{2\alpha} \big)^n}      & \leq& T &
\end{array}
\end{equation*}

Now, assume that $\alpha+1>2nT$; we can guarantee that by selecting $\alpha$ to be, for example, $2^n$, or larger. Then, $0< \frac{2nT}{\alpha+1}<1 $. Let $\varepsilon=\frac{2nT}{\alpha+1}$. Then, we get:
\[  T-\varepsilon \leq \frac{Pr[v\leq C] }{\big(\frac{1}{2} + \frac{1}{2\alpha} \big)^n} \leq T  \]

Note that $\big(\frac{1}{2} + \frac{1}{2\alpha} \big)^n$ can be represented by a polynomial in $n$ number of bits and can be computed in polynomial time.

If we had a polynomial-time algorithm, $A$, to exactly compute $Pr[v\leq C]$ for any $C$ and $\alpha$, then we could exactly compute (also in polynomial time) a number between $T-\varepsilon$ and $T$, for $0<\varepsilon<1$. But, this determines $T$ exactly. So, such an algorithm $A$ would solve a \pmb{\#P}-hard problem in polynomial time.
\end{proof}

\begin{remark*}
If each of the random variables $Y_i$ was of the form $Y_i=w_i$ with probability $p_i=\frac{1}{2}$, and zero otherwise, then the reduction to the KNAPSACK problem would be immediate~\cite{fotakis, kleinberg}. However, the possibility of ties in the various $l_{e_i}$ and $l_{e_i'}$s excludes the respective journeys and the reduction does not carry out immediately. 
\end{remark*}

Now, given a mixed temporal network $G(E_1,E_2, \alpha)$, let $v$ be the random variable representing the maximum temporal flow in $G$.

\begin{definition}
The truncated by $B$ expected maximum temporal flow of $G(E_1,E_2, \alpha)$, denoted by $E[v,B]$, is defined as:
\[
E[v,B] = \sum_{i=1}^B i Pr[v=i]
\]
Clearly, it is $E[v]=E[v, +\infty]$.
\end{definition}

We are now ready to prove the main theorem of this section:
\begin{theorem}
It is \pmb{\#P}-hard to compute the expected maximum truncated Temporal Flow in a Mixed Temporal Network $G(E_1,E_2,\alpha)$.
\end{theorem}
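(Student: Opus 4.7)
The plan is to reduce from \#SUBSETSUM, which is \pmb{\#P}-complete, reusing the same network construction as in Lemma~\ref{lem:sharp_complete} (the $n$ parallel two-edge paths of Figure~\ref{fig:mixed_net'} with capacities $w_1,\ldots,w_n$ on the first edges, $w_i' \geq w_i$ on the second, and every edge picking a single uniform random label from $\{1,\ldots,\alpha\}$ for a suitably chosen $\alpha$). The key bridge between the truncated expectation and a probability mass is the telescoping identity
\[
E[v,B] - E[v,B-1] \;=\; B \cdot Pr[v = B] \text{,}
\]
which follows directly from the definition $E[v,B] = \sum_{i=1}^B i \cdot Pr[v=i]$. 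Consequently, any polynomial-time algorithm for $E[v,\cdot]$ yields, via two oracle calls, a polynomial-time procedure evaluating $Pr[v = B]$ at any positive integer $B$.

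Given a \#SUBSETSUM instance with positive weights $w_1,\ldots,w_n$ and target $B$, I would instantiate the above network and choose $\alpha$ exponentially large, of order $n \cdot 2^{n+3}$ (in direct analogy to the choice $\alpha > 2nT$ made inside the proof of Lemma~\ref{lem:sharp_complete}). Writing $p = \tfrac{1}{2} - \tfrac{1}{2\alpha}$, the same decomposition argument as in that proof gives
\[
Pr[v=B] \;=\; \sum_{S \subseteq \{1,\ldots,n\},~\sum_{i\in S} w_i = B} p^{|S|}(1-p)^{n-|S|} \text{.}
\]
A Bernoulli-type estimate analogous to relation~(\ref{eq:mixed3}) then shows that each summand differs from $(1/2)^n$ by at most $(1/2)^n \cdot n/\alpha$, so the total deviation of $Pr[v=B]$ from $(1/2)^n \cdot T_B$, where $T_B := |\{S : \sum_{i\in S} w_i = B\}|$, is at most $2^n \cdot (1/2)^n \cdot n/\alpha = n/\alpha$, which is below $2^{-(n+1)}$ for the chosen $\alpha$. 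Hence $T_B$ can be recovered exactly as the nearest integer to $2^n \cdot Pr[v=B]$, solving \#SUBSETSUM in polynomial time, and \pmb{\#P}-hardness of computing $E[v,B]$ follows.

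The main obstacle, mirroring the delicate point in Lemma~\ref{lem:sharp_complete}, is that subsets of weight-sum exactly $B$ may have different cardinalities $|S|$, so the probabilities $p^{|S|}(1-p)^{n-|S|}$ cannot be factored out of the sum and $Pr[v=B]$ is not literally $(1/2)^n$ times an integer. The fix is purely quantitative: by pushing $\alpha$ into the exponential regime, the ratio $p/(1-p) = 1 - 2/(\alpha+1)$ is so close to $1$ that all $2^n$ terms in the sum collapse to $(1/2)^n$ to within accumulated error well below $2^{-n}$, leaving just enough room for $T_B$ to be read off uniquely. Note that $\bigl(\tfrac12 + \tfrac{1}{2\alpha}\bigr)^n$ is representable with polynomially many bits and computable in polynomial time, so the rescaling and rounding step is itself polynomial, completing the reduction.
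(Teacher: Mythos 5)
Your proof is correct, but it takes a genuinely different route from the paper's. The paper realizes the truncation \emph{physically}: it funnels the $n$ parallel paths of Figure~\ref{fig:mixed_net'} into an auxiliary node $t_1$ and appends a bottleneck edge $(t_1,\sink)$ of capacity $B$ carrying a \emph{specified} label $b>\alpha$ --- this is the one place where the mixed model with $E_1\neq\emptyset$ is actually exploited --- so that the maximum flow of the augmented network is $v'=\min(v,B)$ and $E[v'] = E[v,B] + B\big(1-Pr[v\leq B]\big)$ (Equation~\ref{eq:final}); oracle calls on the two networks then let it solve for the tail $Pr[v\leq B]$, whose hardness is Lemma~\ref{lem:sharp_complete} invoked as a black box. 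You instead difference the truncated expectation at adjacent thresholds, $E[v,B]-E[v,B-1]=B\cdot Pr[v=B]$, extracting a \emph{point mass} rather than a tail from two calls on the \emph{same} network, and then redo the quantitative core of Lemma~\ref{lem:sharp_complete} for the exact-target count $T_B$, reducing from \#SUBSETSUM rather than from threshold-KNAPSACK counting. What each buys: the paper's argument is modular (no new estimates, Lemma~\ref{lem:sharp_complete} reused verbatim) but requires the auxiliary mixed construction; yours needs no new network and keeps $E_1=\emptyset$, so you in fact establish hardness already for the fully random subclass $\mathrm{URTN}(\alpha)$, a marginally stronger conclusion, at the cost of re-running the large-$\alpha$ rounding argument. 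Two small points to tighten: first, your per-term deviation bound should be $2n/\alpha$ rather than $n/\alpha$, since $p^{k}(1-p)^{n-k} = (1/2)^n (1-1/\alpha)^k(1+1/\alpha)^{n-k}$ and Bernoulli's inequality only controls the lower side, while the upper side needs $(1+1/\alpha)^n \leq e^{n/\alpha} \leq 1+2n/\alpha$; with your choice $\alpha=n2^{n+3}$ (which is conveniently even, as Definition~\ref{def:mixed} requires) the accumulated error over the at most $2^n$ relevant subsets is then at most $2^{-(n+2)}<2^{-(n+1)}$, so the rounding step still identifies $T_B$ uniquely. Second, the telescoping identity silently uses that $v$ is integer-valued, which holds here because the $w_i$ are positive integers and $v=\sum_{i\in S} w_i$ for the random journey-set $S$; stating this makes the appeal to the definition $E[v,B]=\sum_{i=1}^B i\,Pr[v=i]$ airtight, and the exact rational arithmetic in $Pr[v=B]=(E[v,B]-E[v,B-1])/B$ involves denominators of polynomially many bits, paralleling the paper's remark about $\big(\frac{1}{2}+\frac{1}{2\alpha}\big)^n$.
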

\begin{proof}

Consider the single-labelled mixed temporal network $G(E_1,E_2,\alpha)$ of Figure~\ref{fig:mixed_truncated}, in which $\source$ has $n$ outgoing disjoint directed paths of two edges $e_i,e_i'$ to a node $t_1$, and then there is an edge from $t_1$ to $\sink$. The capacity of each edge $(\source,v_i)~,i=1,\ldots,n$, is $w_i$, the capacity of each edge $(v_i, t_1)~,i=1,\ldots,n$, is $w_i' \geq w_i$, and the capacity of the edge $(t_1, \sink)$ is an integer $B$ such that  $\frac{1}{2} \sum_{i=1}^n w_i <B< \sum_{i=1}^n w_i$. The \emph{unique label} of edge $(t_1,\sink)$ is some $b\in \mathbb{N},~b>\alpha$, where $\alpha$ is the maximum possible label that the other edges may select; in particular, each of the edges $(\source,v_i), (v_i, t_1) ~,i=1,\ldots,n$ receives a unique random label drawn uniformly and independently from $\{1, \ldots, \alpha\}$.

\begin{figure}[tbh]
\centering
\includegraphics[scale=0.45]{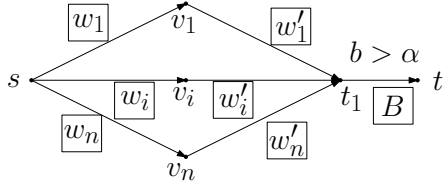}
\caption{A $G(E_1,E_2,\alpha)$ where $E_1=\{(t_1,\sink)\}$ with $l_{(t_1,\sink)}=b>\alpha$.}
\label{fig:mixed_truncated}
\end{figure}

Clearly, the maximum temporal flow from $\source$ to $\sink$ until time $b$ is is $v'=B$, if $v=\sum_{i=1}^n Y_i>B$, and is $v' = v =\sum_{i=1}^n Y_i$, otherwise; here $Y_i,~i=1, \ldots,n$, is the random variable representing the flow passing from $\sink$ to $t_1$ via $v_i$ in the time until $\alpha$.

So, if $E[v']$ is the expected value of $v'$, we have:
\begin{eqnarray}\label{eq:final}
\notag E[v']  & = & \sum_{i=0}^B i Pr[v=i] + B \cdot Pr[v>B] \\
  	        	& = & E[v,B] + B\big( 1-Pr[v\leq B] \big)
\end{eqnarray}
So, if we had a polynomial-time algorithm that could compute truncated expected maximum temporal flow values in mixed temporal networks, then we could compute $E[v']$ and $E[v,B]$; we could then solve Equation~\ref{eq:final} for $Pr[v \leq B]$ and, thus, compute it in polynomial time. But to compute $Pr[v \leq B]$ is \pmb{\#P}-hard by Lemma~\ref{lem:sharp_complete}.
\end{proof}

\section{Conclusions}
We defined and studied here for the first time flows in temporal networks. Our intuitive characterization of temporal cuts for networks with unbounded buffers may lead to fast algorithmic techniques (perhaps by sampling) for computing a minimum cut in such a network. We also considered random availabilities in some of the edges of our networks (mixed temporal networks). An interesting open problem is the existence of a FPTAS for the expected maximum flow value in mixed temporal networks. Another type of dynamic graphs that would be interesting to investigate with respect to the complexity of the maximum flow (by some day $d \in \mathbb{N}$) problem is that of periodic temporal graphs. These are graphs each edge $e$ of which appears every $x_e$ days; $x_e$ is what we call the “edge period”. The maximum flow from $\source$ to $\sink$ would then, in general, increase when we increase the day $d$ by which we wish to compute the flow that arrives at $\sink$. It appears that the problem would require a different approach than the one presented here, that would also take into account the different edge periods.

\bibliographystyle{abbrv}
\bibliography{flows}   

\begin{thebibliography}{10}

\bibitem{wiese}
A.~Adamaszek, P.~Chalermsook, A.~Ene, and A.~Wiese.
\newblock Submodular unsplittable flow on trees.
\newblock In Q.~Louveaux and M.~Skutella, editors, {\em Integer Programming and
  Combinatorial Optimization - 18th International Conference, {IPCO} 2016,
  Li{\`{e}}ge, Belgium, June 1-3, 2016, Proceedings}, volume 9682 of {\em
  Lecture Notes in Computer Science}, pages 337--349. Springer, 2016.

\bibitem{ahuja}
R.~K. Ahuja, T.~L. Magnanti, and J.~B. Orlin.
\newblock {\em Network Flows: Theory, Algorithms, and Applications}.
\newblock Prentice-Hall, Inc., Upper Saddle River, NJ, USA, 1993.

\bibitem{akrida-jpdc}
E.~C. Akrida, L.~Gasieniec, G.~B. Mertzios, and P.~G. Spirakis.
\newblock Ephemeral networks with random availability of links: The case of
  fast networks.
\newblock {\em J. Parallel Distrib. Comput.}, 87:109--120, 2016.

\bibitem{akrida-spaa}
E.~C. Akrida, L.~G\k{a}sieniec, G.~B. Mertzios, and P.~G. Spirakis.
\newblock Ephemeral networks with random availability of links: Diameter and
  connectivity.
\newblock In {\em Proceedings of the 26th ACM Symposium on Parallelism in
  Algorithms and Architectures (SPAA)}, 2014.

\bibitem{akrida-waoa}
E.~C. Akrida, L.~G\k{a}sieniec, G.~B. Mertzios, and P.~G. Spirakis.
\newblock On temporally connected graphs of small cost.
\newblock In {\em Proceedings of the 13th Workshop on Approximation and Online
  Algorithms (WAOA)}, 2015.

\bibitem{akrida-algo}
E.~C. Akrida and P.~G. Spirakis.
\newblock On verifying and maintaining connectivity of interval temporal
  networks.
\newblock In {\em Algorithms for Sensor Systems - 11th International Symposium
  on Algorithms and Experiments for Wireless Sensor Networks, {ALGOSENSORS}
  2015, Patras, Greece, September 17-18, 2015, Revised Selected Papers}, pages
  142--154, 2015.

\bibitem{andoni}
A.~Andoni, A.~Gupta, and R.~Krauthgamer.
\newblock Towards ($1+\epsilon$)-approximate flow sparsifiers.
\newblock In {\em Proceedings of the Twenty-Fifth Annual {ACM-SIAM} Symposium
  on Discrete Algorithms, {SODA} 2014, Portland, Oregon, USA, January 5-7,
  2014}, pages 279--293, 2014.

\bibitem{aronson}
J.~E. Aronson.
\newblock A survey of dynamic network flows.
\newblock {\em Ann. Oper. Res.}, 20(1-4):1--66, Aug. 1989.

\bibitem{avin}
C.~Avin, M.~Kouck\'{y}, and Z.~Lotker.
\newblock How to explore a fast-changing world (cover time of a simple random
  walk on evolving graphs).
\newblock In {\em Proceedings of the 35th International Colloquium on Automata,
  Languages and Programming (ICALP)}, pages 121--132, 2008.

\bibitem{batra_garg}
J.~Batra, N.~Garg, A.~Kumar, T.~M{\"{o}}mke, and A.~Wiese.
\newblock New approximation schemes for unsplittable flow on a path.
\newblock In Indyk \cite{DBLP:conf/soda/2015}, pages 47--58.

\bibitem{naveen2}
J.~Batra, N.~Garg, A.~Kumar, T.~M{\"{o}}mke, and A.~Wiese.
\newblock New approximation schemes for unsplittable flow on a path.
\newblock In Indyk \cite{DBLP:conf/soda/2015}, pages 47--58.

\bibitem{skutella5}
N.~Baumann and M.~Skutella.
\newblock Earliest arrival flows with multiple sources.
\newblock {\em Math. Oper. Res.}, 34(2):499--512, 2009.

\bibitem{berman}
K.~A. Berman.
\newblock Vulnerability of scheduled networks and a generalization of menger's
  theorem.
\newblock {\em Networks}, 28(3):125--134, 1996.

\bibitem{boland}
N.~Boland, T.~Kalinowski, H.~Waterer, and L.~Zheng.
\newblock Scheduling arc maintenance jobs in a network to maximize total flow
  over time.
\newblock {\em Discrete Applied Mathematics}, 163, Part 1:34 -- 52, 2014.
\newblock Matheuristics 2010.

\bibitem{casteigts}
A.~Casteigts, P.~Flocchini, W.~Quattrociocchi, and N.~Santoro.
\newblock Time-varying graphs and dynamic networks.
\newblock {\em International Journal of Parallel, Emergent and Distributed
  Systems (IJPEDS)}, 27(5):387--408, 2012.

\bibitem{chaintreau}
A.~Chaintreau, A.~Mtibaa, L.~Massouli{\'{e}}, and C.~Diot.
\newblock The diameter of opportunistic mobile networks.
\newblock In {\em Proceedings of the 2007 {ACM} Conference on Emerging Network
  Experiment and Technology, CoNEXT 2007, New York, NY, USA, December 10-13,
  2007}, page~12, 2007.

\bibitem{clementi}
A.~E.~F. Clementi, C.~Macci, A.~Monti, F.~Pasquale, and R.~Silvestri.
\newblock Flooding time of edge-markovian evolving graphs.
\newblock {\em SIAM Journal on Discrete Mathematics (SIDMA)}, 24(4):1694--1712,
  2010.

\bibitem{dutta}
C.~Dutta, G.~Pandurangan, R.~Rajaraman, Z.~Sun, and E.~Viola.
\newblock On the complexity of information spreading in dynamic networks.
\newblock In {\em Proceedings of the 24th Annual ACM-SIAM Symposium on Discrete
  Algorithms (SODA)}, pages 717--736, 2013.

\bibitem{eisenbrand}
F.~Eisenbrand, A.~Karrenbauer, M.~Skutella, and C.~Xu.
\newblock Multiline addressing by network flow.
\newblock {\em Algorithmica}, 53(4):583--596, 2009.

\bibitem{naveen1}
K.~M. Elbassioni, N.~Garg, D.~Gupta, A.~Kumar, V.~Narula, and A.~Pal.
\newblock Approximation algorithms for the unsplittable flow problem on paths
  and trees.
\newblock In {\em {IARCS} Annual Conference on Foundations of Software
  Technology and Theoretical Computer Science, {FSTTCS} 2012, December 15-17,
  2012, Hyderabad, India}, pages 267--275, 2012.

\bibitem{erlebach}
T.~Erlebach, M.~Hoffmann, and F.~Kammer.
\newblock On temporal graph exploration.
\newblock In {\em Automata, Languages, and Programming - 42nd International
  Colloquium, {ICALP} 2015, Kyoto, Japan, July 6-10, 2015, Proceedings, Part
  {I}}, pages 444--455, 2015.

\bibitem{skutella1}
L.~Fleischer and M.~Skutella.
\newblock Quickest flows over time.
\newblock {\em {SIAM} J. Comput.}, 36(6):1600--1630, 2007.

\bibitem{tardos}
L.~Fleischer and {\'E}.~Tardos.
\newblock Efficient continuous-time dynamic network flow algorithms.
\newblock {\em Operations Research Letters}, 23(3-5):71--80, 1998.

\bibitem{ford-fulkerson}
D.~R. Ford and D.~R. Fulkerson.
\newblock {\em Flows in Networks}.
\newblock Princeton University Press, Princeton, NJ, USA, 2010.

\bibitem{fotakis}
D.~Fotakis, S.~C. Kontogiannis, E.~Koutsoupias, M.~Mavronicolas, and P.~G.
  Spirakis.
\newblock The structure and complexity of nash equilibria for a selfish routing
  game.
\newblock {\em Theor. Comput. Sci.}, 410(36):3305--3326, 2009.

\bibitem{ghaffari}
M.~Ghaffari, A.~Karrenbauer, F.~Kuhn, C.~Lenzen, and B.~Patt{-}Shamir.
\newblock Near-optimal distributed maximum flow: Extended abstract.
\newblock In {\em Proceedings of the 2015 {ACM} Symposium on Principles of
  Distributed Computing, {PODC} 2015, Donostia-San Sebasti{\'{a}}n, Spain, July
  21 - 23, 2015}, pages 81--90, 2015.

\bibitem{skutella4}
M.~Gro{\ss} and M.~Skutella.
\newblock Generalized maximum flows over time.
\newblock In {\em Approximation and Online Algorithms - 9th International
  Workshop, {WAOA} 2011, Saarbr{\"{u}}cken, Germany, September 8-9, 2011,
  Revised Selected Papers}, pages 247--260, 2011.

\bibitem{racke-flow}
M.~T. Hajiaghayi and H.~R{\"{a}}cke.
\newblock An o(sqrt(n))-approximation algorithm for directed sparsest cut.
\newblock {\em Inf. Process. Lett.}, 97(4):156--160, 2006.

\bibitem{tjandra}
H.~W. Hamacher and S.~A. Tjandra.
\newblock Earliest arrival flows with time-dependent data.
\newblock 2003.

\bibitem{hoppe_trans}
B.~Hoppe and E.~Tardos.
\newblock \&ldquo;the quickest transshipment problem\&rdquo;.
\newblock In {\em Proceedings of the Sixth Annual ACM-SIAM Symposium on
  Discrete Algorithms}, SODA '95, pages 512--521, Philadelphia, PA, USA, 1995.
  Society for Industrial and Applied Mathematics.

\bibitem{hoppe}
B.~Hoppe and {\'{E}}.~Tardos.
\newblock The quickest transshipment problem.
\newblock {\em Math. Oper. Res.}, 25(1):36--62, 2000.

\bibitem{hoppe-phd}
B.~E. Hoppe.
\newblock Phd thesis: Efficient dynamic network flow algorithms, 1995.

\bibitem{DBLP:conf/soda/2015}
P.~Indyk, editor.
\newblock {\em Proceedings of the Twenty-Sixth Annual {ACM-SIAM} Symposium on
  Discrete Algorithms, {SODA} 2015, San Diego, CA, USA, January 4-6, 2015}.
  {SIAM}, 2015.

\bibitem{kamiyama}
N.~Kamiyama and N.~Katoh.
\newblock The universally quickest transshipment problem in a certain class of
  dynamic networks with uniform path-lengths.
\newblock {\em Discrete Applied Mathematics}, 178:89--100, 2014.

\bibitem{Kappmeier15}
J.-P.~W. Kappmeier.
\newblock {\em Generalizations of Flows over Time with Applications in
  Evacuation Optimization}.
\newblock epubli GmbH, 2015.

\bibitem{kempe}
D.~Kempe, J.~M. Kleinberg, and A.~Kumar.
\newblock Connectivity and inference problems for temporal networks.
\newblock In {\em Proceedings of the 32nd annual ACM symposium on Theory of
  computing (STOC)}, pages 504--513, 2000.

\bibitem{kleinberg}
J.~M. Kleinberg, Y.~Rabani, and {\'{E}}.~Tardos.
\newblock Allocating bandwidth for bursty connections.
\newblock {\em {SIAM} J. Comput.}, 30(1):191--217, 2000.

\bibitem{woeginger}
B.~Klinz and G.~J. Woeginger.
\newblock One, two, three, many, or: complexity aspects of dynamic network
  flows with dedicated arcs.
\newblock {\em Oper. Res. Lett.}, 22(4-5):119--127, 1998.

\bibitem{skutella2}
R.~Koch, E.~Nasrabadi, and M.~Skutella.
\newblock Continuous and discrete flows over time - {A} general model based on
  measure theory.
\newblock {\em Math. Meth. of {OR}}, 73(3):301--337, 2011.

\bibitem{sankowski}
J.~Lacki, Y.~Nussbaum, P.~Sankowski, and C.~Wulff{-}Nilsen.
\newblock Single source - all sinks max flows in planar digraphs.
\newblock In {\em 53rd Annual {IEEE} Symposium on Foundations of Computer
  Science, {FOCS} 2012, New Brunswick, NJ, USA, October 20-23, 2012}, pages
  599--608, 2012.

\bibitem{madry}
A.~Madry.
\newblock Fast approximation algorithms for cut-based problems in undirected
  graphs.
\newblock In {\em 51th Annual {IEEE} Symposium on Foundations of Computer
  Science, {FOCS} 2010, October 23-26, 2010, Las Vegas, Nevada, {USA}}, pages
  245--254, 2010.

\bibitem{spirakis}
G.~B. Mertzios, O.~Michail, I.~Chatzigiannakis, and P.~G. Spirakis.
\newblock Temporal network optimization subject to connectivity constraints.
\newblock In {\em Proceedings of the 40th International Colloquium on Automata,
  Languages and Programming (ICALP), Part II}, pages 657--668, 2013.

\bibitem{michail}
O.~Michail and P.~G. Spirakis.
\newblock Traveling salesman problems in temporal graphs.
\newblock In E.~Csuhaj{-}Varj{\'{u}}, M.~Dietzfelbinger, and Z.~{\'{E}}sik,
  editors, {\em Mathematical Foundations of Computer Science 2014 - 39th
  International Symposium, {MFCS} 2014, Budapest, Hungary, August 25-29, 2014.
  Proceedings, Part {II}}, volume 8635 of {\em Lecture Notes in Computer
  Science}, pages 553--564. Springer, 2014.

\bibitem{nicosia}
V.~Nicosia, J.~Tang, C.~Mascolo, M.~Musolesi, G.~Russo, and V.~Latora.
\newblock {Graph Metrics for Temporal Networks}.
\newblock In P.~Holme and J.~Saram{\"a}ki, editors, {\em Temporal Networks},
  Understanding Complex Systems, pages 15--40. Springer, May 2013.

\bibitem{o'dell}
R.~O'Dell and R.~Wattenhofer.
\newblock Information dissemination in highly dynamic graphs.
\newblock In {\em Proceedings of the 2005 joint workshop on Foundations of
  Mobile Computing (DIALM-POMC)}, pages 104--110, 2005.

\bibitem{orlin}
J.~B. Orlin.
\newblock Max flows in o(nm) time, or better.
\newblock In {\em Symposium on Theory of Computing Conference, STOC'13, Palo
  Alto, CA, USA, June 1-4, 2013}, pages 765--774, 2013.

\bibitem{papadimitriou}
C.~M. Papadimitriou.
\newblock {\em {Computational complexity}}.
\newblock Addison-Wesley, Reading, Massachusetts, 1994.

\bibitem{powell}
W.~B. Powell, P.~Jaillet, and A.~Odoni.
\newblock Stochastic and dynamic networks and routing.
\newblock {\em Handbooks in operations research and management science},
  8:141--295, 1995.

\bibitem{radzik}
T.~Radzik.
\newblock Minimizing capacity violations in a transshipment network.
\newblock In {\em Proceedings of the Third Annual ACM-SIAM Symposium on
  Discrete Algorithms}, SODA '92, pages 185--194, Philadelphia, PA, USA, 1992.
  Society for Industrial and Applied Mathematics.

\bibitem{radzik2}
T.~Radzik.
\newblock Faster algorithms for the generalized network flow problem.
\newblock {\em Mathematics of Operations Research}, 23(1):69--100, 1998.

\bibitem{radzik3}
T.~Radzik.
\newblock Improving time bounds on maximum generalised flow computations by
  contracting the network.
\newblock {\em Theor. Comput. Sci.}, 312(1):75--97, 2004.

\bibitem{sch}
C.~Scheideler.
\newblock Models and techniques for communication in dynamic networks.
\newblock In {\em Proceedings of the 19th Annual Symposium on Theoretical
  Aspects of Computer Science (STACS)}, volume 2285, pages 27--49, 2002.

\bibitem{serna}
M.~J. Serna.
\newblock Randomized parallel approximations to max flow.
\newblock In {\em Encyclopedia of Algorithms}, pages 1750--1753. 2016.

\bibitem{skutella3}
M.~Skutella.
\newblock An introduction to network flows over time.
\newblock In {\em Research Trends in Combinatorial Optimization, Bonn Workshop
  on Combinatorial Optimization, November 3-7, 2008, Bonn, Germany}, pages
  451--482, 2008.

\end{thebibliography}
\end{document}